\let\newfloat\newfloat@ltx
\def\HC{\mathcal{H}}
\def\LC{\mathcal{L}}
\def\ad{^{\dagger}}
\def\a{\alpha}
\newcommand{\fsnull}[1]{}
\newcommand{\old}[1]{}
\tikzset{every picture/.style=remember picture}
\newcommand{\dya}[1]{\ket{#1}\!\bra{#1}}
\newcommand{\BC}{\mathcal{B}}
\newcommand{\CC}{\mathcal{C}}
\newcommand{\NC}{\mathcal{N}}
\newcommand{\PC}{\mathcal{P}}
\newcommand{\SC}{\mathcal{S}}
\renewcommand{\geq}{\geqslant}
\renewcommand{\leq}{\leqslant}
\newcommand{\spn}{{\rm span}}
\renewcommand{\vec}[1]{\boldsymbol{#1}}  
\newcommand*{\id}{\openone}
\newcommand{\bs}{\textsf{BS}}
\def\be{\begin{equation}}
\def\ee{\end{equation}}
\def\bs{\begin{split}}
\def\e{\end{split}}
\def\ba{\begin{eqnarray}}
\def\bea{\begin{eqnarray}}
\def\tea{\end{eqnarray}}
\def\ea{\end{eqnarray}}
\def\eea{\end{eqnarray}}
\def\a{\alpha}
\def\a{\alpha}
\def\g{\mathfrak{g}}
\def\a{\alpha}
\newcommand\mbb[1]{\mathbb{#1}}
\newcommand\Z{\text{Z}}
\newtheorem{theorem}{Theorem}
\newtheorem{definition}{Definition}
\def\be{\begin{equation}}
\def\te{\end{equation}}
\def\ee{\end{equation}}
\def\ba{\begin{eqnarray}}
\def\bea{\begin{eqnarray}}
\def\tea{\end{eqnarray}}
\def\ea{\end{eqnarray}}
\def\eea{\end{eqnarray}}
\begin{document}

\title{Characterizing quantum resourcefulness via group-Fourier decompositions}

\author{Pablo Bermejo}
\thanks{The first two authors contributed equally to this work.}
\affiliation{Information Sciences, Los Alamos National Laboratory, Los Alamos, NM 87545, USA}
\affiliation{Donostia International Physics Center, Paseo Manuel de Lardizabal 4, E-20018 San Sebasti\'an, Spain}
\affiliation{Department of Applied Physics, University of the Basque
Country (UPV/EHU), 20018 San Sebastián, Spain}

\author{Paolo Braccia}
\thanks{The first two authors contributed equally to this work.}
\affiliation{Theoretical Division, Los Alamos National Laboratory, Los Alamos, NM 87545, USA}

\author{Antonio Anna Mele}
\affiliation{Dahlem Center for Complex Quantum Systems, Freie Universität Berlin, 14195 Berlin, Germany}
\affiliation{Theoretical Division, Los Alamos National Laboratory, Los Alamos, NM 87545, USA}

\author{N.~L.~Diaz}
\affiliation{Information Sciences, Los Alamos National Laboratory, Los Alamos, New Mexico 87545, USA}
\affiliation{Center for Non-Linear Studies, Los Alamos National Laboratory, 87545 NM, USA}
\affiliation{Departamento de F\'isica-IFLP/CONICET, Universidad Nacional de La Plata, C.C. 67, La Plata 1900, Argentina}

\author{Andrew E. Deneris}
\affiliation{Information Sciences, Los Alamos National Laboratory, Los Alamos, NM 87545, USA}

\author{Mart\'{i}n Larocca}
\affiliation{Theoretical Division, Los Alamos National Laboratory, Los Alamos, NM 87545, USA}

\author{M. Cerezo}
\thanks{cerezo@lanl.gov}
\affiliation{Information Sciences, Los Alamos National Laboratory, Los Alamos, NM 87545, USA}
\affiliation{Quantum Science Center, Oak Ridge, TN 37931, USA}

\begin{abstract}
In this work we present a general framework for studying the resourcefulness in pure states for quantum resource theories (QRTs) whose free operations arise from the unitary representation of a group. We argue that the group Fourier decompositions (GFDs) of a state, i.e., its projection onto the irreducible representations (irreps) of the Hilbert space, operator space, and tensor products thereof, constitute fingerprints of resourcefulness and complexity. By focusing on the norm of the irrep projections, dubbed GFD purities, we find that low-resource states live in the small dimensional irreps of operator space, whereas high-resource states have support in more, and higher dimensional ones. Such behavior not only resembles that appearing in classical harmonic analysis, but is also universal across the QRTs of entanglement,  fermionic Gaussianity, spin coherence, and Clifford stabilizerness. To finish, we show that GFD purities carry operational meaning as they lead to resourcefulness witnesses as well as to notions of state compressibility.
\end{abstract}

\maketitle

\section{Introduction}

Harmonic analysis refers to the framework for studying objects on vector spaces admitting a group action by decomposing them into components indexed by the irreducible representations of such group—a process known as group Fourier decomposition (GFD).~\cite{katznelson2004introduction,mackey1980harmonic}. As part of signal processing, GFD enables tasks such as frequency filtering, noise reduction,  solving differential equations, and also plays a key role in formulating spatial rigid motions for robotics and motion planning~\cite{chirikjian2000engineering}. More recently, GFD has seen widespread use in the context of geometrical classical machine learning where one characterizes the data in terms of its inherent symmetries, thus improving the learning model's efficiency and generalization~\cite{bronstein2021geometric,cohen2018spherical}. 

 As a unifying framework, GFD allows the extrapolation of tools and lessons from one theory onto another, showcasing that certain behaviors are universal across different applications. For instance, when decomposing a real-valued function via a discrete Fourier transform into the one-dimensional irreps of $\mathbb{Z}_n$, one finds that more ``complex functions'' have support into more, and higher-momentum, irreps. A similar behavior appears when studying spherically symmetric images through their decomposition into spherical Harmonics, the irreps of $\mathbb{SO}(3)$. For instance, when taking a three-dimensional image of the sun and its magnetic field~\cite{altschuler1975tabulation,whaler1981spherical,knaack2005spherical}, one finds that the component onto smaller dimensional irreps captures simple coarse global features of the data (i.e., the smaller irrep only describes a solid sphere), whereas finer more localized details of complex phenomenon like sun spots necessitates higher‑dimensional irreps to adequately describe the data (see Fig.~\ref{fig:fig-1}). Crucially, such cross-cutting behavior connecting complexity and irrep dimension is also of practical interest as the truncation of GFD higher-dimensional or higher-moment irreps components is central for data compression techniques~\cite{marks2009handbook}.   

Given the high versatility of GFDs, its basic tools have been used within the context of quantum resource theories (QRTs)~\cite{chitambar2019quantum}. For instance, the computation of projections of a state onto the elements of a Lie group (i.e., characteristic functions)~\cite{gu1985group,korbicz2006group,korbicz2008entanglement,marvian2013theory} or those of a Lie algebra or preferred subset of observables~\cite{barnum2003generalizations,barnum2004subsystem,klyachko2002coherent,delbourgo1977maximum, meyer2002global,brennen2003observable,beckey2021computable,schatzki2022hierarchy,balachandran2013entanglement,harshman2011observables,zanardi2001virtual,zanardi2004quantum,nha2006entanglement,alicki2009quantum,viola2010entanglement,derkacz2011entanglement,gigena2015entanglement,benatti2016entanglement,regula2017convex,sindici2018simple,gigena2021many,guaita2021generalization,ahmad2022quantum} to compute group-invariant quantities~\cite{grassl1998computing,barnum2001monotones,miyake2003classification,leifer2004measuring,mandilara2006quantum,klyachko2007dynamical,oszmaniec2013universal,bravyi2019simulation,larocca2022group,meyer2023exploiting,nguyen2022atheory,skolik2022equivariant}, provides valuable insights regarding quantum phase transitions~\cite{werlang2011spotlighting,viscondi2009generalized}, quantum chaos~\cite{weinstein2006generalized}, equivalence classes between states~\cite{nielsen1999conditions,barnum2001monotones,kus2001geometry,verstraete2003normal,chitambar2019quantum,miyake2003classification,garibaldi2022generic,park2024universal}, as well as the usefulness of quantum states for quantum computation~\cite{gigena2020one,datta2005entanglement,gross2009most,diaz2023showcasing}. Despite these important advancements there is a dearth of results which take a holistic view of QRTs to understand whether the universal behaviors arising in classical GFD analysis still persist in the quantum realm.

In this work we present a general framework for GFDs within QRTs where the set of resource-free operations is described by the unitary representation of a group (Lie and discrete). In particular, we argue that the  projection of a pure quantum state onto the irreps appearing in the group-Fourier decomposition of the Hilbert space, the operator space, or in their tensor products, enable new dimensions to characterize the state's resourcefulness. Central to our framework is the concept of irrep purities, which are defined as the norms of the irrep projections. Indeed, by studying the irrep purities across the QRTs of entanglement (bipartite and multipartite), fermionic Gaussianity, spin coherence, and Clifford stabilizerness we show that resource-free states live in less, but most importantly, lower dimensional irreps. On the other hand, the more resourceful a state is, the more it will have support in more, and higher dimensional, irreps (see Fig.~\ref{fig:fig-1}), therefore reflecting a behavior similar to that observed in classical harmonic analysis.  In addition, we also show that certain purities on the smaller (non-trivial) irreps are resourcefulness witnesses, being maximized only for free states. Then, by borrowing inspiration from classical harmonic analysis we study whether the truncation of a state's component in the higher dimensional irreps, or alternatively, its projection onto the smaller dimensional ones, is meaningful. Here, we show that resource-free states on Lie group-based QRTs can be compressed to, and thus fully described by, the information encoded into the smaller dimensional irreps that constitute a resourcefulness witness, therefore being the ``simpler'' of all states. To finish, we discuss how the GFD purities  can capture features inequivalent to those described by the state's extent --its minimal decomposition into free states.

\begin{figure}[t]
    \centering
\includegraphics[width=.9\linewidth]{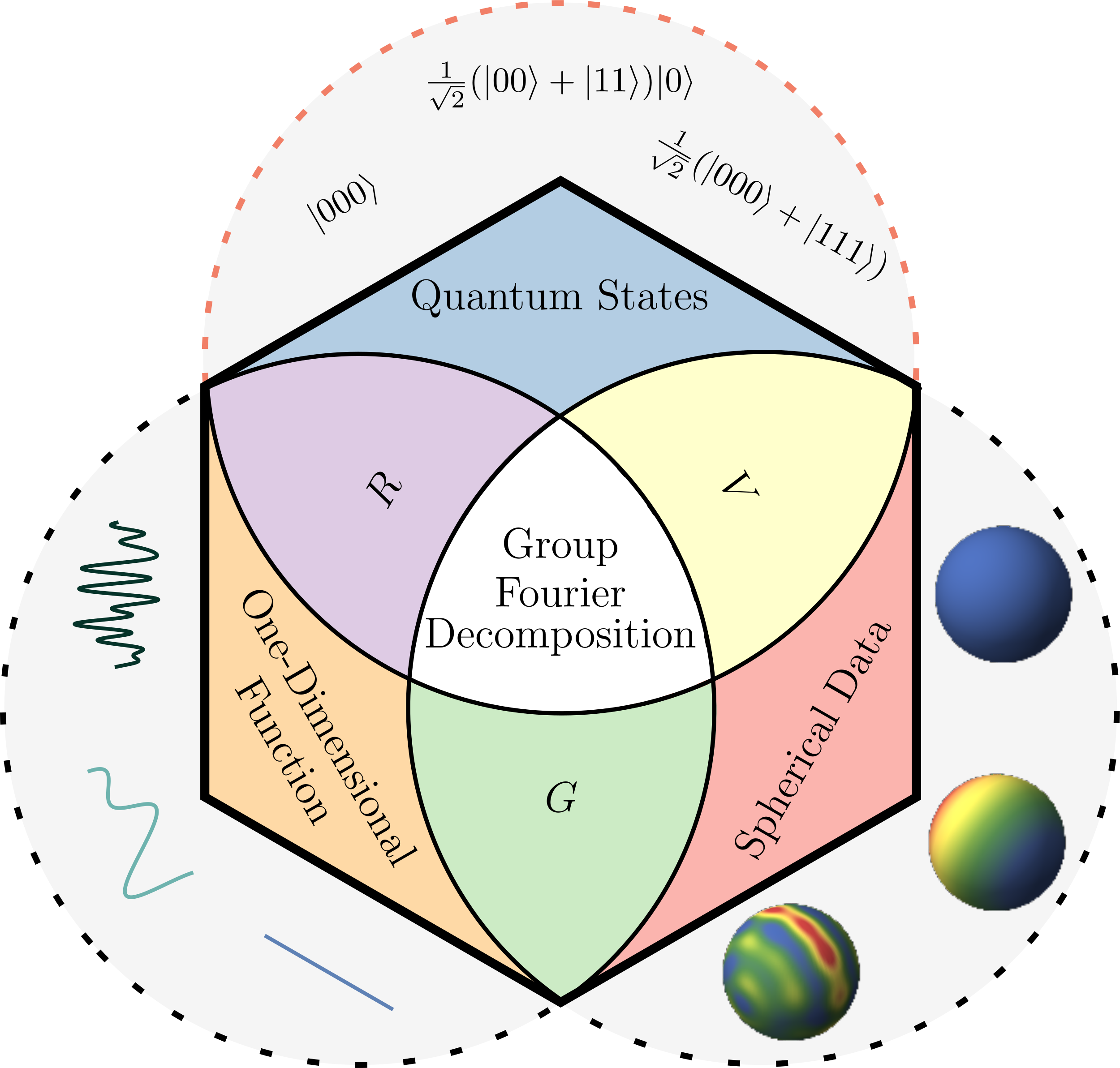}
    \caption{\textbf{GFD framework.} The GFD is based on the irrep structure of a vector space $V$ induced by the action of a representation $R$ of a group $G$. Then, given some mathematical object in the vector space (one dimensional function, spherical images, quantum states), one decomposes them into the group's irreps. We uncover the existence of universal behaviors  such as ``less complex'' vectors living in --and compressible to-- the  lower dimensional irreps, whereas more complex vectors live in more and higher dimensional ones.  }
    \label{fig:fig-1}
\end{figure}

\section{Results}

\subsection{Framework}

Let $\HC\cong\mbb{C}^d$ be a finite dimension Hilbert space and denote $\LC(\HC)\coloneqq \HC\otimes \HC^*$ the associated space of linear operators, and $\mathbb{U}(\HC)$ the group of unitary operators on $\HC$. While \textit{a priori} there are no states in $\HC$, nor unitaries in $\mathbb{U}(\HC)$, that should play a privileged role, QRTs~\cite{chitambar2019quantum} provide a framework to study the fact that under some set of rules, certain states and unitaries are more valuable than others (see also our companion work where we show that QRTs can be defined in terms of preserved algebraic structures~\cite{nahuel2025quantum}). As such, in a QRT one defines a set of \textit{free} operations and a set of \textit{free} states which are assumed to not generate, or contain, any resource. Formally, these ideas can be captured in the following definition
\begin{definition}[Free operations and states]\label{def:free} 
We define the set of free operations as a unitary representation $R$ of a compact group $G$, i.e., $R:G\rightarrow \mathbb{GL}(\HC)$, with  $\mathbb{GL}(\HC)$ the general linear group over $\HC$. Then, the set of pure free states is obtained as the orbit of some reference state $\ket{\psi_{\rm ref }}$ under $R(G)$, i.e., all free states take the form $R(g)\ket{\psi_{\rm ref }}$ for some $g\in G$. 
\end{definition}
Importantly, once we have defined the free operations as a representation of a group, we can use Definition~\ref{def:free} can  to define an action over $t$ copies of $\HC$, and their ensuing operator space, through the $t$-th fold tensor product, and through the associated adjoint action, respectively. 

From here, let us take a small step back and consider a general vector space $V$ (e.g., $V=\HC,\LC(\HC),\HC^{\otimes 2},\LC(\HC^{\otimes 2}),\ldots$) holding a representation $R$ of $G$, the group of free operations.  Maschke's theorem~\cite{serre1977linear,fulton1991representation} states that for any compact $G$, there is a basis in which $R(g\in G)$ and $V$ are maximally block-diagonalized
\begin{equation}
\label{eq:block-diagonalization}
R(g) \simeq \bigoplus_{\alpha} I_{m_\alpha} \otimes r^{\alpha}_G(g)\,, \quad V \simeq \bigoplus_\alpha \mathbb{C}^{m_\alpha} \otimes V^{\alpha}_G\,.
\end{equation}
Above, $r^\alpha_G:G\rightarrow \mathbb{U}(V^\alpha_G)$ is an irreducible representation (an ``irrep'') of $G$ labeled by $\alpha$, and $V^\alpha_G$ the corresponding irreducible $G$-module.  The number of times each irrep $r^\alpha_G$ appears in $R$ (or equivalently $V^\alpha_G$ in $V$) is the multiplicity $m_\alpha$. We will denote the $j$-th copy of the $\alpha$-th irrep as $V_{\alpha,j}\subseteq V$.

The previous allows us to define two important concepts.  
\begin{definition}[Group-Fourier decomposition (GFD)]\label{def:irrep-decomp} 
    Given a vector $v\in V$,  we define its GFD as its projections onto the irreps of Eq.~\eqref{eq:block-diagonalization}. In particular, given an  orthonormal basis $\{w_{\alpha,j}^{(i)}\}_{i=1}^{\dim(V_{\alpha,j})}$ of some irrep, the projection of $v\in V$ onto $V_{\alpha,j}$ is the vector 
\begin{equation}
    v_{j,\alpha}=\sum_{i=1}^{\dim(V_{\alpha,j})}\langle w_{\alpha,j}^{(i)},v\rangle w_{\alpha,j}^{(i)}\subseteq V\,,
\end{equation}
with $\langle\cdot,\cdot\rangle:V\times V\rightarrow R$ the inner product  in $V$.
\end{definition}
\noindent Importantly, the GFD components are orthogonal
\begin{equation}\label{eq:orth}
\langle v_{j,\alpha},v_{j',\alpha'}\rangle=|v_{j,\alpha}|^2\delta_{j,j'}\delta_{\alpha,\alpha'}\,,
\end{equation}
and  they uniquely determine the vector $v$, since 
\begin{equation}\label{eq:comp}
v=  \sum_{\alpha}\sum_{j=1}^{m_\alpha}v_{j,\alpha}\,.
\end{equation}
Next, we define the norm of each GFD component, which quantifies how much support $v$ has within an irrep, as the GFD purity.
\begin{definition}[GFD purity]\label{def:irrep-purity}
   Given a vector $v\in V$, the norm of the vectors $v_{j,\alpha}$ are defined as GFD purities, and are denoted as
\begin{align}\label{eq:irrep-purity}
    \PC_{j,\alpha}(v)=\langle v_{j,\alpha},v_{j,\alpha} \rangle= \sum_{i=1}^{\dim(V_{\alpha,j})}\left|\left\langle w_{\alpha,j}^{(i)},v\right\rangle\right|^2\,.
\end{align}
\end{definition}

As shown below, the GFD purities will play a central role in our framework, as they  constitute a veritable fingerprint for the resourcefulness of a state. Therefore, we find it convenient to first present some important properties satisfied by these quantities.  Firstly, 
combining Eqs.~\eqref{eq:comp} and~\eqref{eq:irrep-purity} leads to
\begin{equation}\label{eq:normalization}
    \sum_{\alpha}\sum_{j=1}^{m_\alpha} \PC_{\alpha,j}(v)=\langle v,v\rangle\,,
\end{equation}
which simply states the fact that since the vector is distributed across the irreps of $V$, then their sums must add-up to the vector's norm. For normalized vectors, i.e., $\langle v,v\rangle=1$, this means that the GFD purities form a probability distribution. In addition, Eq.~\eqref{eq:normalization} indicates that  the more a vector lives in one irrep, the less it can belong to others. Second, one can verify that Eq.~\eqref{eq:block-diagonalization} implies that the GFD purities are group invariants, as for any $g\in G$ 
\begin{equation}\label{eq:invariance-purity}
    \PC_{\alpha,j}(R(g)\cdot v)=\PC_{\alpha,j}(v)\,.
\end{equation}
The previous ensures that the GFD probability distribution is a sane framework to study resourcefulness, as all the vectors in the same orbit of (i.e., connected by) free operations will have the same GFD purity profiles (see Definition~\ref{def:free}). Moreover, as discussed in the Methods (and in our companion manuscript~\cite{mele2025clifford}), the number of different irrep purities can be readily bounded by noting that these quantities arise as projectors of $v^{\otimes 2}$ onto the trivial components of the irrep decomposition of $V^{\otimes 2}$. As such, the more trivial irreps appear in $V^{\otimes 2}$, and the more $v^{\otimes 2}$ has non-trivial and non-equivalent component therein, the richer the ensuing GFD purity structure.

Here we also find it important to note that up to this point we have not made any assumption regarding what $V$ is. As such, since $G$ admits representations in the vector spaces $\HC$, $ \LC(\HC)$, $\HC^{\otimes 2}$, $\LC(\HC^{\otimes 2})$, $\ldots$; we can employ the GFD theory to hierarchically study the purities of $\ket{\psi}$, $\rho=\dya{\psi}$, $\ket{\psi}^{\otimes 2}$, $\rho^{\otimes 2}$, $\ldots$, leading to polynomials of increasing order in the entries of the state and its dual.  

\subsection{Operational meaning of the GFD purities}

Here we discuss the operational meaning of the GFD formalism by connecting it to resourcefulness witnesses, i.e. quantities maximized only for free states, and to notions of state compressibility. 

First, consider the case where the free operations arise from a Lie group $G$, with associated semi-simple Lie algebra $\mathfrak{g}$ (which we assume for simplicity to contain a single non-trivial component in its reductive decomposition). Here, free states correspond to generalized coherent states, i.e., the highest weights of $\mathfrak{g}$~\cite{barnum2003generalizations,barnum2004subsystem,perelomov1977generalized,gilmore1974properties,zhang1990coherent}. By taking  $V=\LC(\HC)$, and noting that a Lie algebra is closed under the adjoint action of its Lie group, then the complexification of the Lie algebra's representation $\mathbb{C}R(\mathfrak{g})$ must be one of the irreps of $V=\LC(\HC)$. As shown in~\cite{barnum2003generalizations,barnum2004subsystem}, the following theorem holds 
\begin{theorem}[Purity in the algebra is a resourcefulness witness, Theorem in~\cite{barnum2004subsystem}]\label{theo:algebra-purity}
    Let $\PC_{\mathbb{C}R(\mathfrak{g})}(\rho)$ be the purity of a state $\rho=\dya{\psi}$ onto the irrep $\mathbb{C}R(\mathfrak{g})$ of $\LC(\HC)$. Then, $\PC_{\mathbb{C}R(\mathfrak{g})}(\rho)$ is maximized iff $\ket{\psi}$ is a free state.
\end{theorem}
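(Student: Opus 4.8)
The plan is to turn the statement into a short chain of inequalities organized around the GFD component of $\rho$ itself inside the irrep $\mathbb C R(\mathfrak g)$. It is convenient to first record a Casimir reformulation: fixing a Hermitian, Hilbert--Schmidt--orthonormal basis $\{g_a\}$ of the Hermitian part of $\mathbb C R(\mathfrak g)$, one has $\PC_{\mathbb C R(\mathfrak g)}(\rho)=\sum_a\mte{\psi}{g_a}^2$, while $\sum_a g_a^2$ is (a multiple of) the quadratic Casimir and hence equal to a constant $c_\Lambda$ on the $\HC$-irrep. Thus $\PC_{\mathbb C R(\mathfrak g)}(\rho)=c_\Lambda-\sum_a\Var_\psi(g_a)$, so maximizing the algebra purity is the same as minimizing the total (group-invariant) uncertainty, whose extremality for generalized coherent states is classical; to keep things self-contained I would prove it directly.

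Let $X_\psi\coloneqq \rho_{\mathbb C R(\mathfrak g)}=\sum_a\mte{\psi}{g_a}\,g_a$ be the GFD component of $\rho=\dya{\psi}$ in $\mathbb C R(\mathfrak g)$ (Definition~\ref{def:irrep-decomp}); it is Hermitian, and since the remaining GFD components of $\rho$ are orthogonal to it one gets the double identity $\mte{\psi}{X_\psi}=\Tr(X_\psi^2)=\PC_{\mathbb C R(\mathfrak g)}(\rho)$. The first step is the trivial bound $\PC_{\mathbb C R(\mathfrak g)}(\rho)=\mte{\psi}{X_\psi}\leq\lambda_{\max}(X_\psi)$, saturated iff $\ket{\psi}$ is a top eigenvector of $X_\psi$. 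The second step controls $\lambda_{\max}(X_\psi)$: since $G$ is compact and $\mathfrak g$ is (by hypothesis) simple, every Hermitian element of $\mathbb C R(\mathfrak g)$ is $R(G)$-conjugate into the closed dominant Weyl chamber of a Cartan subalgebra spanned by a commuting subset $\{h_i\}\subseteq\{g_a\}$ (maximal-torus conjugacy, plus the Weyl group being realized inside $R(G)$); writing the conjugate as $H=\sum_i c_i h_i$ one gets $\lambda_{\max}(X_\psi)=\lambda_{\max}(H)=\langle\vec c,\vec\Lambda\rangle$, where $\vec\Lambda$ collects the values of the highest weight $\Lambda$ of the $\HC$-irrep on the $h_i$ — the maximum over weights being attained at $\Lambda$ because $\vec c$ is dominant and any other weight differs from $\Lambda$ by a non-negative combination of positive roots. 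The third step is Cauchy--Schwarz, $\langle\vec c,\vec\Lambda\rangle\leq\|\vec c\|\,\|\vec\Lambda\|$, together with normalization bookkeeping: completing $\{h_i\}$ to $\{g_a\}$ by the Hermitian/anti-Hermitian combinations of root vectors normalized so $\Tr(e_\alpha e_{-\alpha})=1$, HS-orthonormality and conjugation-invariance of $\Tr(\cdot^2)$ give $\|\vec c\|^2=\Tr(H^2)=\Tr(X_\psi^2)=\PC_{\mathbb C R(\mathfrak g)}(\rho)$, whereas evaluating on the weight vector $\ket{\Lambda}$ (where all root-vector terms vanish, since $e_{\pm\alpha}$ change the weight) gives $\|\vec\Lambda\|^2=\PC_{\mathbb C R(\mathfrak g)}(\dya{\Lambda})$. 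Chaining the three steps yields $\PC_{\mathbb C R(\mathfrak g)}(\rho)\leq\sqrt{\PC_{\mathbb C R(\mathfrak g)}(\rho)}\,\sqrt{\PC_{\mathbb C R(\mathfrak g)}(\dya{\Lambda})}$, i.e. $\PC_{\mathbb C R(\mathfrak g)}(\rho)\leq\PC_{\mathbb C R(\mathfrak g)}(\dya{\Lambda})$; the bound is attained at $\ket{\Lambda}$ and, by the $G$-invariance of GFD purities (Eq.~\eqref{eq:invariance-purity}), by its whole $R(G)$-orbit, which is exactly the set of free states.

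For the converse I would run the equality analysis backwards: at a maximizer, $\ket{\psi}$ is a top eigenvector of $X_\psi$, Cauchy--Schwarz equality forces $\vec c=s\vec\Lambda$ with $s>0$, and matching the norms via $\|\vec c\|^2=\PC_{\mathbb C R(\mathfrak g)}(\rho)=\PC_{\mathbb C R(\mathfrak g)}(\dya{\Lambda})=\|\vec\Lambda\|^2$ forces $s=1$, so $X_\psi$ is $R(G)$-conjugate to $H_\Lambda=\sum_i\Lambda_i h_i$. It then remains to show that the top eigenspace of $H_\Lambda$ is the one-dimensional highest-weight space $\mathbb C\ket{\Lambda}$; this follows from the standard fact that every weight $\mu$ of the irrep lies in the convex hull of the Weyl orbit of $\Lambda$, hence $\|\vec\mu\|\leq\|\vec\Lambda\|$, so $\langle\vec\Lambda,\vec\mu\rangle=\|\vec\Lambda\|^2$ forces $\vec\mu=\vec\Lambda$. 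Conjugating back, $\ket{\psi}\in R(G)\ket{\Lambda}$, i.e. $\ket{\psi}$ is free.

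The step I expect to be the real obstacle is this last equality case: isolating the top eigenspace of $H_\Lambda$ when $\Lambda$ is a singular (on-a-wall) weight, where naively several weights could tie for the largest eigenvalue, is precisely what the convex-hull-of-the-Weyl-orbit argument (plus one-dimensionality of the highest-weight space) is needed to rule out. A secondary but unavoidable nuisance is the normalization bookkeeping in the third step — one must ensure the inner product used in the Cauchy--Schwarz step is literally the Hilbert--Schmidt form restricted to the Cartan, so that $\|\vec c\|^2$ equals $\PC_{\mathbb C R(\mathfrak g)}(\rho)$ exactly rather than up to a representation-dependent constant — and the root-vector normalization $\Tr(e_\alpha e_{-\alpha})=1$ is what makes this consistent.
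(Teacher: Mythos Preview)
The paper does not supply its own proof of this theorem; it is quoted verbatim as a result of Barnum et al.\ (the remark in the proof of Theorem~\ref{prop:compress} in the Supplemental Information explicitly says ``see Ref.~\cite{barnum2004subsystem} for its proof''). So there is nothing in the paper to compare against beyond the citation.

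That said, your argument is correct and is essentially the classical one due to Delbourgo--Fox and reworked in the Barnum et al.\ papers: rewriting the algebra purity as $c_\Lambda-\sum_a\Var_\psi(g_a)$ via the quadratic Casimir and identifying the total-variance minimizers with the highest-weight orbit. Your three-step chain (expectation $\leq$ top eigenvalue; conjugation into the dominant chamber so that $\lambda_{\max}=\langle\vec c,\vec\Lambda\rangle$; Cauchy--Schwarz against $\vec\Lambda$) is a clean packaging of that argument, and the equality analysis is sound. In particular, the worry you flag about singular $\Lambda$ is already resolved by the convex-hull step you wrote down: every weight $\mu$ satisfies $\|\vec\mu\|\leq\|\vec\Lambda\|$, so $\langle\vec\Lambda,\vec\mu\rangle=\|\vec\Lambda\|^2$ forces $\vec\mu=\vec\Lambda$ by Cauchy--Schwarz alone, with no separate on-the-wall case analysis required. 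The normalization bookkeeping is also fine as you have it, since the inner product in the Cauchy--Schwarz step is literally the Hilbert--Schmidt form restricted to the Cartan, which is what makes $\|\vec c\|^2=\Tr(H^2)=\Tr(X_\psi^2)=\PC_{\mathbb C R(\mathfrak g)}(\rho)$ exact.
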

Theorem~\ref{theo:algebra-purity} indicates that one of the irreps in operator space\footnote{As shown below, for all examples considered, $\mathbb{C}R(\mathfrak{g})$ (the adjoint representation),  will be one of the smallest irreps in $\LC(\HC)$.} constitutes a resourcefulness witness. Notably, in our companion paper \cite{mele2025clifford} we show that such connection between purities and witnesses can be extended beyond Lie group-based QRTs.  Therein  we analyze the setting of Clifford stabilizerness where the free operations are Clifford unitaries, and thus form a discrete group. In particular, we show that the stabilizer entropy~\cite{leone2022stabilizer}, a well-known measure of non-stabilizerness,  corresponds to a GFD purity in $\LC(\HC^{\otimes 2})$ onto the smallest irrep whose purity is non-trivial.

Then, let us borrow inspiration from classical harmonic analysis, and study the properties of a vector $v$ which has been projected  onto all irreps whose dimension is smaller than, or equal to, a fixed threshold value $\Lambda$. Such a vector is obtained by imposing a cut-off in the summation of~\eqref{eq:comp}
\begin{equation}\label{eq:comp-cutoff}
v_{\Lambda}=  \sum_{\substack{\alpha\\\dim(V_{\alpha,j})\leq \Lambda}}\sum_{j=1}^{m_\alpha}v_{j,\alpha}\,.
\end{equation}
From the previous, we can immediately see that the distance between $v$ and its truncated version $v_{\Lambda}$, as measured through their inner-product, leads to
\begin{equation}
    \PC_\Lambda(v):=\langle v, v_{\Lambda}\rangle=\sum_{\substack{\alpha\\\dim(V_{\alpha,j})\leq \Lambda}}\sum_{j=1}^{m_\alpha}\PC_{j,\alpha}(v)\,,
\end{equation}
thus providing an operational interpretation to the cumulative probabilities arising from the GFD purities. 

Then, it could occur that $v_{\Lambda}$ fails to meet certain desirable properties which $v$ possesses (e.g.,  being normalized). In this case, one could attempt to find a vector $u$ that matches the components of $v$ on all the irreps of dimension smaller than $\Lambda$. This can be achieved by solving the  Maximum-Entropy (MaxEnt) principle, or semidefinite programming, problem
\begin{align}
    \langle w_{\alpha,j}^{(i)},v\rangle=\langle w_{\alpha,j}^{(i)},u&\rangle\,,\,\,  \forall w_{\alpha,j}^{(i)}\in V_{\alpha,j} \text{ s.t. } \dim(V_{\alpha,j})\leq \Lambda\nonumber\\
    &\text{subject to } \{F_i(u)\}_i\,,\label{eq:max-ent}
\end{align}
where $F_i(u)$ are a set of conditions that $u$ must satisfy (e.g.,  being normalized). In particular, given some solution $u$ to Eq.~\eqref{eq:max-ent}, we can readily find via the Cauchy-Schwarz inequality that $\langle v, u\rangle\geq 2 \PC_\Lambda(v)-1$. Hence, the less resourcefulness $v$ possess, i.e., the  more $v$ has support in smaller dimensional irreps, the less quantities one has to match in Eq.~\eqref{eq:max-ent}, to guarantee that $v$ and $u$ are close. Notably, the previous bound is extremely loose for the case of Lie group-based QRTs. Here, if we set a cut-off $\Lambda$ such that $\Lambda=\dim(\mathfrak{g})$, we can  prove the following theorem 
\begin{theorem}[MaxEnt and free states]\label{prop:compress}
Let  $\rho=\dya{\psi}$ be a pure free state. If we find a pure state  $\sigma$ which solves the MaxEnt problem of Eq.~\eqref{eq:max-ent} for $\Lambda=\dim(\mathfrak{g})$, then $\sigma=\rho$.  
\end{theorem}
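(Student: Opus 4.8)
The plan is to leverage Theorem~\ref{theo:algebra-purity} together with the observation that the cut-off $\Lambda=\dim(\mathfrak{g})$ is tuned precisely so that the truncation in Eq.~\eqref{eq:comp-cutoff} retains the adjoint irrep $\mathbb{C}R(\mathfrak{g})$ of $\LC(\HC)$ (whose dimension equals $\dim(\mathfrak{g})$) as well as the trivial irrep. First I would note that any pure state $\sigma$ solving the MaxEnt problem of Eq.~\eqref{eq:max-ent} with this cut-off must have the same projection onto $\mathbb{C}R(\mathfrak{g})$ as $\rho=\dya{\psi}$; through the retained trivial irrep it also has $\mathrm{tr}(\sigma)=\mathrm{tr}(\rho)=1$, consistent with $\sigma$ being a bona fide state. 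In particular $\PC_{\mathbb{C}R(\mathfrak{g})}(\sigma)=\PC_{\mathbb{C}R(\mathfrak{g})}(\rho)$. Since $\ket{\psi}$ is free, Theorem~\ref{theo:algebra-purity} tells us $\PC_{\mathbb{C}R(\mathfrak{g})}(\rho)$ attains the maximal value of this purity over pure states; hence so does $\PC_{\mathbb{C}R(\mathfrak{g})}(\sigma)$, and applying Theorem~\ref{theo:algebra-purity} in the other direction forces the pure state $\sigma$ to be free as well. Thus the problem reduces to showing that a free state is uniquely pinned down by its $\mathbb{C}R(\mathfrak{g})$-component.

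To establish this I would write $\rho=R(g)\dya{\psi_{\rm ref}}R(g)^\dagger$ and $\sigma=R(h)\dya{\psi_{\rm ref}}R(h)^\dagger$. Since $\mathbb{C}R(\mathfrak{g})$ is an irreducible $G$-submodule of $\LC(\HC)$ for the conjugation action (because $R(g)R(X)R(g)^\dagger=R(\mathrm{Ad}_g X)$ for $X\in\mathfrak{g}$), the orthogonal projector onto it commutes with all free operations, so the $\mathbb{C}R(\mathfrak{g})$-component of $\rho$ equals $R(g)A_{\rm ref}R(g)^\dagger$ and that of $\sigma$ equals $R(h)A_{\rm ref}R(h)^\dagger$, where $A_{\rm ref}$ is the $\mathbb{C}R(\mathfrak{g})$-component of $\dya{\psi_{\rm ref}}$. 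Equating the two (they coincide, by the first paragraph) gives $R(k)A_{\rm ref}R(k)^\dagger=A_{\rm ref}$ with $k:=h^{-1}g$. Next I would use that $\ket{\psi_{\rm ref}}$ is a highest-weight vector: expanding $A_{\rm ref}$ in a Cartan--Weyl basis of $\mathbb{C}R(\mathfrak{g})$ shows that all root-space contributions vanish (a root operator shifts the weight of $\ket{\psi_{\rm ref}}$, so the corresponding expectation value is zero), whence $A_{\rm ref}$ is a nonzero positive multiple of the Hermitian operator representing the Cartan element $H_\lambda$ conjugate to the highest weight $\lambda$. Consequently $R(k)$ commutes with $A_{\rm ref}$ and preserves each of its eigenspaces.

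The crux — and the step I expect to require the most care — is then to argue that the top eigenspace of $A_{\rm ref}$ on $\HC$ is one-dimensional and equal to $\mathbb{C}\ket{\psi_{\rm ref}}$. This amounts to the statement that no weight $\mu\neq\lambda$ of the irreducible $G$-module carrying $\ket{\psi_{\rm ref}}$ satisfies $\langle\mu,\lambda\rangle=\langle\lambda,\lambda\rangle$: dominance of $\lambda$ already forces such a $\mu$ to differ from $\lambda$ only by simple roots orthogonal to $\lambda$, and the resulting ``top'' graded piece of the module — an irreducible representation of the centralizer of $H_\lambda$ with highest weight the restriction of $\lambda$ — is one-dimensional because $\lambda$ pairs to zero with every coroot of that centralizer. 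Equivalently, the isotropy group of $A_{\rm ref}$ under conjugation coincides with that of the ray $\mathbb{C}\ket{\psi_{\rm ref}}$, which is the familiar fact that the highest-weight orbit embeds into $\mathfrak{g}^*$ as a coadjoint orbit via the moment map, exploited for generalized coherent states by Perelomov and in Refs.~\cite{perelomov1977generalized,barnum2003generalizations,barnum2004subsystem}. Granting this, $R(k)\ket{\psi_{\rm ref}}\propto\ket{\psi_{\rm ref}}$, hence $R(k)\dya{\psi_{\rm ref}}R(k)^\dagger=\dya{\psi_{\rm ref}}$, and finally $\sigma=R(h)R(k)\dya{\psi_{\rm ref}}R(k)^\dagger R(h)^\dagger=R(g)\dya{\psi_{\rm ref}}R(g)^\dagger=\rho$. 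Everything except this eigenspace/isotropy statement is routine bookkeeping with Theorem~\ref{theo:algebra-purity} and $G$-equivariance; I would also remark that only the trivial and adjoint components of the MaxEnt conditions are actually used, so the same conclusion holds verbatim for any cut-off $\Lambda\geq\dim(\mathfrak{g})$.
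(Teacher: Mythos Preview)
Your proof is correct and opens exactly as the paper's does: matching the algebra-irrep projections forces $\PC_{\mathbb{C}R(\mathfrak{g})}(\sigma)=\PC_{\mathbb{C}R(\mathfrak{g})}(\rho)$, and Theorem~\ref{theo:algebra-purity} then pins $\sigma$ down as a generalized coherent state. Where you diverge is in the uniqueness step. The paper invokes Wick's theorem: for a highest-weight vector $\ket{\phi}$, any product $\bra{\phi}O_1\cdots O_M\ket{\phi}$ of Lie-algebra elements can be reduced, via the Cartan--Weyl decomposition and normal ordering (lowering operators pushed left, raising operators right, both annihilating $\ket{\phi}$), to a polynomial in the weights $h_k=\bra{\phi}H_k\ket{\phi}$ and the structure constants; since $\rho$ and $\sigma$ agree on $\mathbb{C}R(\mathfrak{g})$ they therefore agree on every such product and hence as operators. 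You instead run a geometric/isotropy argument: equality of the algebra projections means $R(h^{-1}g)$ centralizes $A_{\rm ref}\propto R(H_\lambda)$, whose top eigenspace on $\HC$ is the one-dimensional highest-weight line, so $R(h^{-1}g)$ fixes the ray $\mathbb{C}\ket{\psi_{\rm ref}}$ and $\sigma=\rho$. The paper's route is more operational---it literally exhibits how every higher-irrep expectation value is built from the algebra data---and dovetails with the compressibility narrative in the surrounding text. Your route is cleaner representation-theoretically and makes explicit the moment-map/coadjoint-orbit picture behind Perelomov coherent states, at the price of the ``top eigenspace is one-dimensional'' lemma you correctly flag as the crux (your sketch of it via the Levi centralizer is sound; a shorter version is that $\langle\alpha_i^\vee,\lambda\rangle=0$ forces the $\alpha_i$-string through $\lambda$ to have length one). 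Your closing observation that only the trivial and adjoint constraints are actually used, so the conclusion persists for any $\Lambda\geq\dim(\mathfrak{g})$, is a worthwhile remark that is also implicit in the paper's Wick-theorem argument.
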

\noindent    
We refer the reader to the Supplemental Information (SI) for a proof of Theorem~\ref{prop:compress}. This result implies that there exist QRTs for which  solving the MaxEnt problem in the irrep arising from the algebra guarantees that $u=v$. That is, free states are compressible to, and fully determined by, the  quantities in the smallest non-trivial irrep.  Theorem~\ref{prop:compress} also implies that when free states have support in higher-dimensional irreps, which will necessarily happen as it is generally impossible to fit all the purity in the small-dimensional irreps, the information contained therein is redundant.

\subsection{GFD Purities for different QRTs}

Here we showcase the GFD purity formalism for several QRTs based on Lie and discrete groups. In all cases, we refer the reader to the Methods for explicit formulae for the purities, and to the SI for the associated derivations.

\subsubsection{Bipartite entanglement in two-qubit states}
Consider the QRT of standard bipartite entanglement in a two-qubit system. Here, $\HC=(\mathbb{C}^2)^{\otimes 2}$, and  the free operations are given by the standard representation $R$ of $G=\mathbb{SU}(2)\times \mathbb{SU}(2)$. Since $R$ is irreducibly represented in $\HC$, the whole Hilbert space is a single irrep, meaning that the purity of any pair of states is exactly the same ($\PC_{\HC}( \ket{\psi})=\PC_{\HC}( \ket{\phi})=1$ for all $\ket{\psi},\ket{\phi}\in \HC$). Hence, one needs to go up the hierarchy and study the GFD purities of $\rho=\dya{\psi}$ in $V=\LC(\HC)$ under the adjoint action of $R$. The operator space decomposes into four multiplicity-free irreps as
\begin{equation}\label{eq:irreps-L-2q}
    \LC(\HC)\cong\LC_{(0,0)}\oplus \LC_{(0,1)}\oplus \LC_{(1,0)}\oplus \LC_{(1,1)}\,,
\end{equation}
where a basis for $\LC_{\alpha}$, with $\alpha\in\{0,1\}^{\otimes 2}$, is given by all operators of the form $P_1^{\alpha_1}\otimes P_2^{\alpha_2}$. Specifically, $\LC_{(0,0)}={\rm span}_{\mathbb{C}}\{\id_1\otimes \id_2\}$, $ \LC_{(1,0)}={\rm span}_{\mathbb{C}}\{X_1,Y_1,Z_1\}$ (and similarly for $\LC_{(0,1)}$) and $\LC_{(1,1)}={\rm span}_{\mathbb{C}}\{X_1X_2,X_1Y_2,\ldots,Z_1Z_2\}$. Here, $X_\mu$, $Y_\mu$ and $Z_\mu$ denote the Pauli matrices on the $\mu$-th qubit. We also find that $\dim(\LC_{\alpha})=3^{w(\alpha)}$, where $w(\alpha)$ is the Hamming weight of $\alpha$, and therefore $\LC_{(1,1)}$ is the largest irrep.

\begin{figure}[t]
    \centering
\includegraphics[width=.9\linewidth]{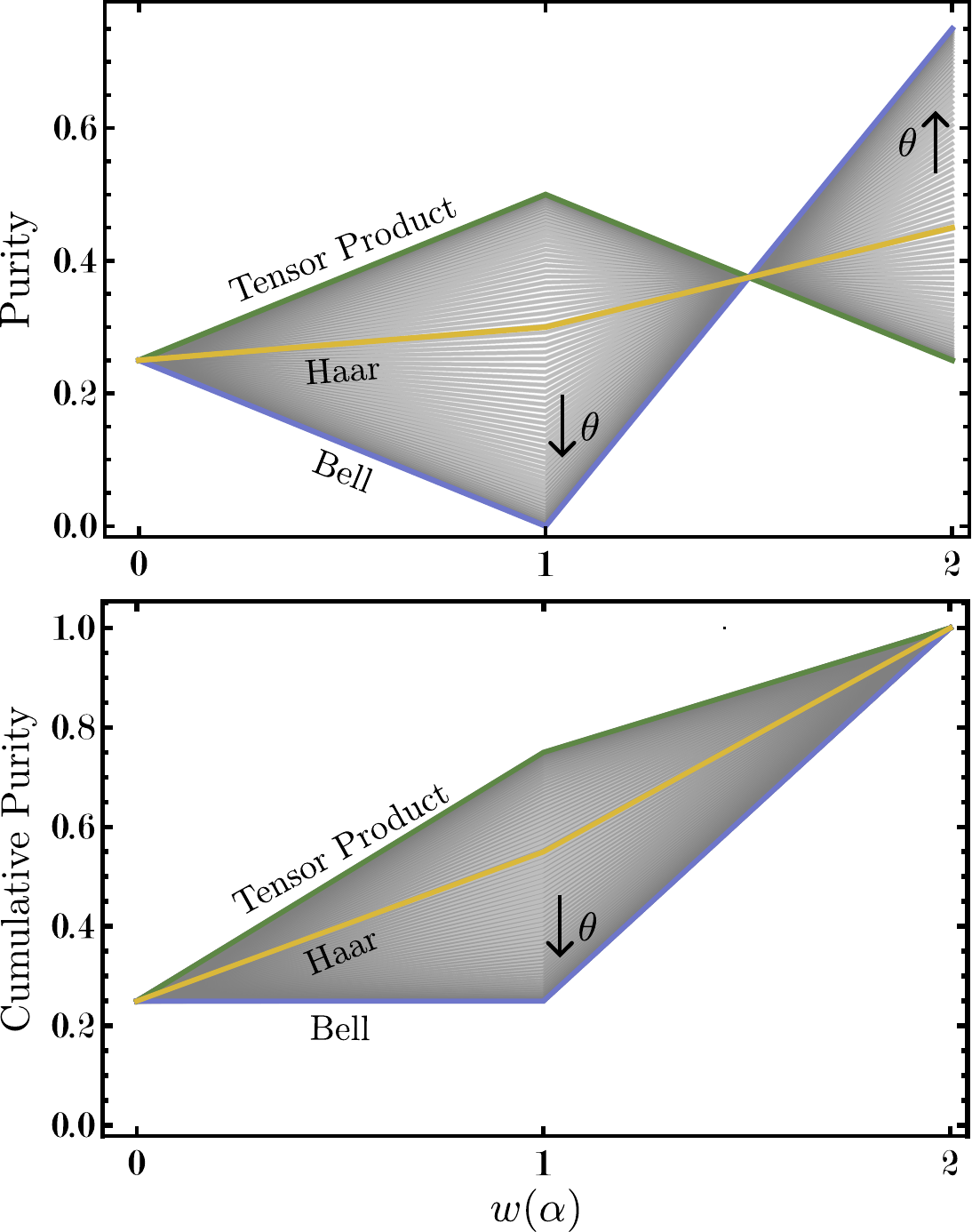}
    \caption{\textbf{GFD purities for two-qubit bipartite entanglement.} (Top) Purity profiles for $\rho_P$, $\rho_B$, $\rho_\theta$ and $\rho_H$ (defined in the main text), aggregated by irrep dimension, i.e., by irreps with the same Hamming weight $w(\alpha)$. As such, the purity shown for irrep with  $w(\alpha)=0$ is $\PC_{(0,0)}(\rho)$, that for  $w(\alpha)=1$ is $\PC_{(0,1)}(\rho)+\PC_{(0,1)}(\rho)$, and that for  $w(\alpha)=2$ is $\PC_{(1,1)}(\rho)$. (Bottom) Cumulative purity for the same states in the top panel.   }
    \label{fig:2q}
\end{figure}

At this point, we find it convenient to note that the GFD purities satisfy the property $\PC_{(0,0)}(\rho)=\frac{1}{4}$, and $\PC_{(1,0)}(\rho)=\PC_{(0,1)}(\rho)$ for any $\rho$. The first equality is trivial as it corresponds to the identity component of the state. Then, one finds  that the purities in $\LC_{(1,0)}$ and $\LC_{(0,1)}$ match due to the fact that the reduced density matrices of a bipartite quantum system (of the same dimension) are equal up to an unimportant local change of basis (i.e., purities are constant under free operations as per Eq.~\eqref{eq:invariance-purity}. Moreover,  Eq.~\eqref{eq:normalization} implies
\begin{equation}\label{eq:relation-purities}
    2\PC_{(1,0)}(\rho)+\PC_{(1,1)}(\rho)=\frac{3}{4}\,,
\end{equation}
clearly indicating that when purity is removed from the  small dimensional irreps, it must go to the larger dimensional one. In particular, Eq.~\eqref{eq:relation-purities} shows there is a single degree of freedom for the purities in $\LC(\HC)$ (as per Theorem~\ref{theo:trivial}~\cite{larocca2022group,mele2023introduction} in the Methods section), which implies that bipartite entanglement has a simple structure over the irreps of $\LC(\HC)$.

Next, consider the following states of interest:  a generic product, non-resourceful,  state $\rho_P=\dya{\psi_1}\otimes \dya{\psi_2}$, the maximally entangled Bell state $\rho_B=\dya{\Phi}$, with $\ket{\Phi}=\frac{1}{\sqrt{2}}(\ket{00}+\ket{11})$, a parametrized state that interpolates between product and maximally-entangled given by $\rho_{\theta}=\dya{\psi(\theta)}$, $\ket{\psi(\theta)}=\cos(\frac{\theta}{2})\ket{00}+\sin(\frac{\theta}{2})\ket{11}$ for $\theta\in [0,\pi/2]$, and a two-qubit Haar random state $\rho_H=\dya{\psi_H}$, with $\ket{\psi_H}$ sampled according to the Haar measure over $\HC$. Their GFD purities\footnote{For the case of Haar random states, we will henceforth report their average, expected purities.}  are shown in Fig.~\ref{fig:2q}, where we can already observe some of the typical behavior that will arise across all QRTs considered in this work. First, we can see from the top panel that as the states become more entangled (e.g., by increasing $\theta$ from $0$ to $\pi/2$ and interpolating between the all zero state and the Bell state), the purity gets shifted from the smaller dimensional irreps $\LC_{(1,0)}$ and $\LC_{(0,1)}$ to the largest irrep $\LC_{(1,1)}$.  
Importantly, we can also see that while a Haar random state has a larger purity in $\LC_{(1,1)}$ than a tensor product state, it does not saturate this value as the Bell state does. In some way, this indicates that for a state to maximize its  purity in the largest irrep, the entanglement in the state needs to be somewhat ``structured''. Moreover, we can observe that the purity in the irreps with $w(\alpha)=1$ is maximized for tensor product state (which we recall, are the free states of the QRT). This is a consequence of  Theorem~\ref{theo:algebra-purity} (generalized to a Lie algebra with two reductive components). In particular, one can verify that for any $\rho$, $\PC_{(1,0)}(\rho)=\frac{1}{2}\left(\Tr[\rho_1^2]-\frac{1}{2}\right)$, where $\rho_1$ denotes the reduced state on the first qubit, meaning that the purity on $\LC_{(1,0)}$, and concomitantly also that in $\LC_{(0,1)}$, are entanglement witnesses as they are proportional to the reduced state purity~\cite{wilde2013quantum}.  Indeed, in the limiting case of a Bell state, the purity in $\LC_{(1,0)}$ and $\LC_{(0,1)}$ is exactly zero, indicating maximal entanglement. Then, we note that in the bottom panel of Fig.~\ref{fig:2q} we also depict the cumulative purity, which also shows that the probability distribution of entangled states have larger mass in the larger irrep.

Then,  we also illustrate the result in Theorem~\ref{prop:compress} by noting that given a pure product state, if we find a state $\sigma$ such that $\Tr[\rho_P B]=\Tr[\sigma B]$ for $B\in\{X_1,Y_1,Z_1,X_2,Y_2,Z_2\}$, then $\sigma =\rho_P$. To see why this is the case we note that since $\Tr[\rho_P B]$ is pure, then $\Tr[\rho_P Z_\mu]^2+\Tr[\rho_P Y_\mu]^2+\Tr[\rho_P Z_\mu]^2=1$ (for $\mu=1,2$), and concomitantly,   $\Tr[\sigma Z_\mu]^2+\Tr[\sigma Y_\mu]^2+\Tr[\sigma Z_\mu]^2=1$. The latter implies that $\sigma$ is pure and a product state. Since product states are determined by their marginals, and the marginals of $\rho_P$ and $\sigma$ match, one obtains $\sigma=\rho_P$ as desired.

\subsubsection{Multipartite entanglement in $n$-qubit states}
Next, we study the QRT of multipartite entanglement in an $n$-qubit system. Now, $\HC=(\mathbb{C}^2)^{\otimes n}$, and the free operations are given by the standard representation $R$ of the group $G=\mathbb{SU}(2)\times \mathbb{SU}(2)\times \cdots\times \mathbb{SU}(2)$. Again, $R$ is irreducibly represented in $\HC$, so we need to study the GFD purities for the irreps in $V=\LC(\HC)$ under the adjoint action of $R$. Operator space decomposes into $2^n$ multiplicity-free irreps as
\begin{equation}\label{eq:irreps-L-nq}
    \LC(\HC)=\bigoplus_{\alpha \in\{0,1\}^{\otimes n}}\LC_{\alpha}\,,
\end{equation}
where the basis of $\LC_{\alpha}$ contains all operators of the form $P_1^{\alpha_1}\otimes P_2^{\alpha_2}\otimes \cdots \otimes P_n^{\alpha_n}$, and thus is of dimension $3^{w(\alpha)}$. For ease of notation, we will henceforth group the purities of all irreducible representations of the same dimension—when $w(\a)=k$, there are $\binom{n}{k}$ such irreps. 

\begin{figure}[t]
    \centering
\includegraphics[width=.9\linewidth]{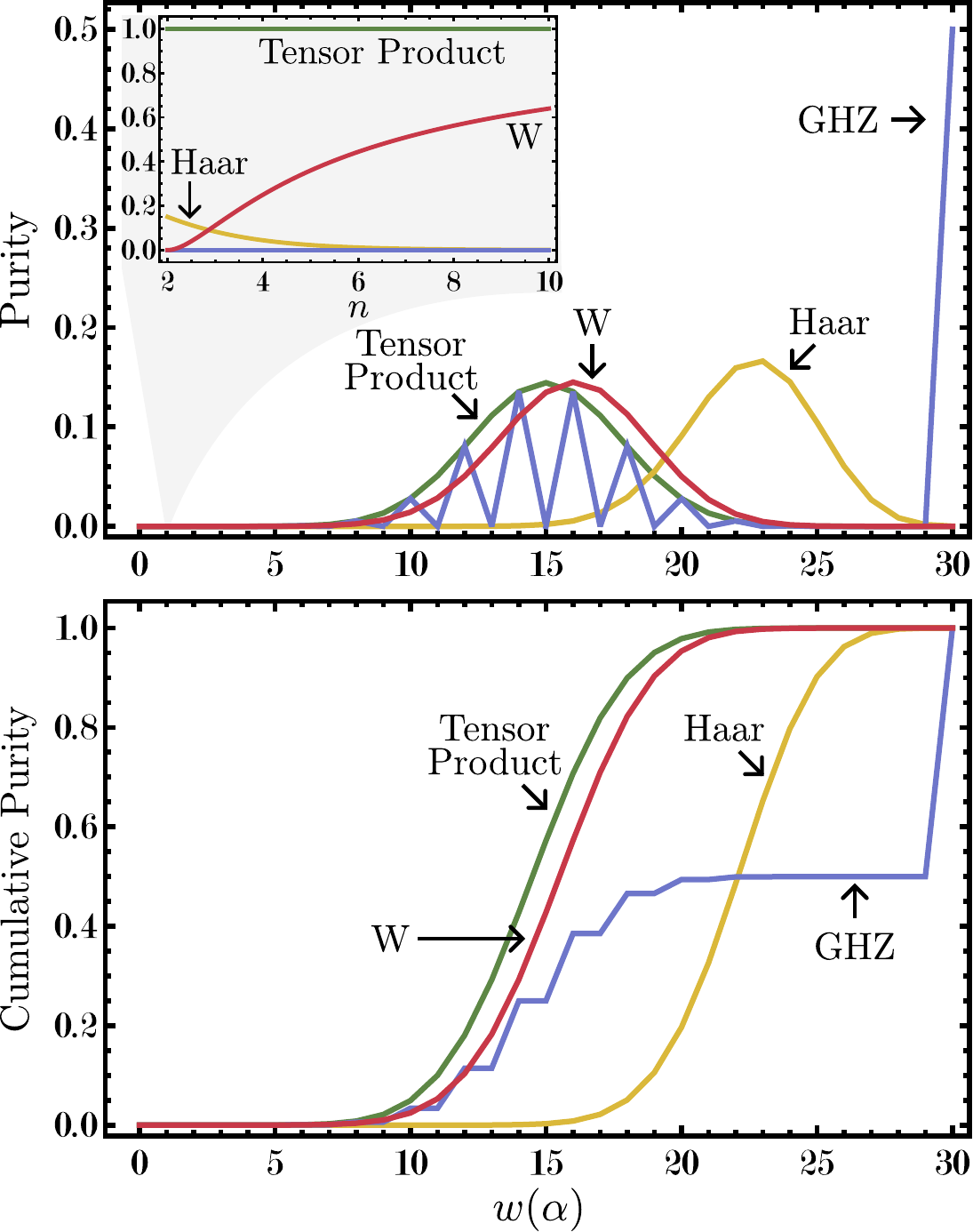}
    \caption{\textbf{GFD purities for $n$-qubit multipartite entanglement.} Purity profiles for $\rho_P$, $\rho_{{\rm GHZ}}$, $\rho_{{\rm W}}$ and $\rho_H$ (defined in the main text), aggregated by irrep dimension, i.e., by irreps with the same Hamming weight $w(\alpha)$. We show representative result $n=30$ qubits. Inset depicts the normalized purity $\LC_{\alpha}$ for an irrep with $w(\alpha)=1$ versus the number of qubits $n$. (Bottom) Cumulative purity for the same data in the top panel.  }
    \label{fig:nq}
\end{figure}

Then, consider the following states:  a generic tensor product state $\rho_P=\bigotimes_{\mu=1}^n \dya{\psi_\mu}$, the GHZ state $\rho_{{\rm GHZ}}=\dya{{\rm GHZ}}$ with $
\ket{{\rm GHZ}}=\frac{1}{\sqrt{2}}(\ket{0}^{\otimes n}+\ket{1}^{\otimes n})$, the $W$ state $\rho_{{\rm W}}=\dya{{\rm W}}$ with $\ket{W}=
\frac{1}{\sqrt{n}}\sum_{\mu=1}^{n}X_i|0 \rangle^{\otimes n}$, and an $n$-qubit Haar random state $\rho_H=\dya{\psi_H}$. Their GFD purities are shown in Fig.~\ref{fig:nq}. From the top panel, we can see that for all considered entangled states, the  purities are shifted towards higher-dimensional irreps (a fact also reflected in the cumulative probabilities of the bottom panel). In particular, while the profiles for the W and the Haar random state are smooth, that of the GHZ state is significantly different. For a GHZ state, the purity exhibits a zig-zag pattern that alternates between being equal to zero or to that of a tensor product state; except for the largest irrep which contains all the remaining purity.  In fact, as we can see from the cumulative purity, almost half of the GHZ's mass is in the largest irrep. As expected, we observe that the GFD purity profiles exhibit a richer structure and a larger variety of shapes than those arising in bipartite entanglement. As per Theorem~\ref{theo:trivial} (see the Methods) this is a consequence of the fact that the number of trivial components in the two-fold tensor product decomposition of $\LC(\HC)$ grows exponentially with $n$~\cite{larocca2022group}, leading to more degrees of freedom. 

Then, the inset of the top panel of Fig.~\ref{fig:nq}  shows  the normalized purity in the irreps $\LC_{\alpha}$ with $w(\alpha)=1$. As expected from Theorem~\ref{theo:algebra-purity}, this quantity is a multipartite entanglement witness and is maximized for the tensor product state, equal to zero for GHZ,  exponentially small for a Haar random state, and converging to one for W states (indicating that W states are ``not too'' entangled). We can understand why these purities are bona fide measures of entanglement from the fact that they are proportional to $\sum_{\mu=1}^n \Tr[\rho_\mu^2]$ (with $\rho_\mu$ the reduced density matrix on the $\mu$-th qubit), a well-known witness for multipartite entanglement~\cite{meyer2002global,brennen2003observable,beckey2021computable,schatzki2022hierarchy}. 

In the bottom panel of Fig.~\ref{fig:nq}, we also show the cumulative probability. Here, the curve for the tensor product state lies at the ``left'' of the plot, indicating that its purity is mostly concentrated in the smaller dimensional irreps. Then, the $W$ state and Haar random states have smooth curves that are shifted to the ``right'', with that of the Haar random state showing that $\rho_H$ is, on average, more distributed among all large dimensional irreps. 

To finish, we note that while the projection onto $\LC_{\alpha}$ with $w(\alpha)=1$ provides an entanglement witness, one can also construct combinations of GFD purities that enable the hierarchical study of different structures of multipartite entanglement, showing that there can exist projections into special subsets of irreps that are also entanglement monotones and which carry additional meaning~\cite{schatzki2022hierarchy}.

As a bonus fact, we note that when they exist, absolutely maximally entangled states $\rho_{\rm AME}$~\cite{helwig2013absolutely,goyeneche2015absolutely,huber2017absolutely,casas2025quantum}, will possess an extreme GFD profile. Indeed, as shown in the Methods,  their purity will be zero on all irreps except the smallest trivial irrep $\LC_{(0,\ldots,0)}$, and the largest irrep $\LC_{(1,\ldots,1)}$. In particular, one finds   $\PC_{(1,\ldots,1)}(\rho_{\rm AME})=\frac{2^n-1}{2^n}$, meaning that absolutely maximally entangled states maximize the support on the largest irrep.

\subsubsection{Fermionic Gaussianity}
Next, we study the QRT of fermion Gaussianity in a system of $n$ spinless fermions.  Here, we can use the isomorphism between the Fock space of $n$ Majorana modes and the
$n$-qubit Hilbert space to set  $\HC=(\mathbb{C}^2)^{\otimes n}$. Then, free operations are obtained from the spinor representation $R$ of  $G=\mathbb{SO}(2n)$~\cite{brauer1935spinors,kokcu2021fixed,kazi2024analyzing}, while free states are Gaussian states~\cite{weedbrook2012gaussian,hebenstreit2019all,diaz2023showcasing,jozsa2008matchgates,mele2024efficient,dias2023classical,goh2023lie,brod2011extending}. Now, $\HC$ decomposes into two irreps as $\HC=\HC_+\oplus \HC_-$, where $\HC_{\pm}$ contains all the eigenstates of the fermionic parity operator $Z^{\otimes n}$ with eigenvalue equal to $\pm 1$. As such, all physical states  within a parity sector (e.g., even) have the same purity regardless of how much non-Gaussianity they possess. Therefore, one needs to study the GFD purities for the irreps in $V=\LC(\HC)$ under the adjoint action of $R$. To understand the decomposition of the operator space we find it convenient to define the (Hermitian) Majorana fermionic operators,
\begin{equation}
\begin{split}
    c_1&=X\id\dots \id,\; c_3= ZX\id\dots \id, \;\dots,\; c_{2n-1}=Z\dots Z X\,, \nonumber\\
        c_2&=Y\id\dots\id,\; c_4= ZY\id\dots \id, \; \dots,\;\; c_{2n}\;\;\;=Z\dots Z Y\,,
\end{split}
\end{equation}
where we also recall that the unitaries in $R(G)$ are obtained as the exponential of products of two discting Majoranas. 
From here, one finds that 
\begin{equation}  \label{eq:irrep-L-ferm}  \LC(\HC)=\bigoplus_{\alpha=0}^{2n}\LC_\alpha\,,
\end{equation}
where $\LC_\alpha$ is spanned by the product of $\alpha$ distinct Majoranas, meaning that $\dim(\LC_\alpha)=\binom{2n}{\alpha}$. 

Then, let $n$ be a multiple of four and let us consider states belonging to the even parity subsector (i.e., their purity in all irreps $\LC_\alpha$ with odd values of $\alpha$ is zero). We will consider here a   generic pure Gaussian state $\rho_G$, and three types of non-Gaussian states. First, the GHZ state (which we note can be expressed as the linear combination of two Gaussian states). Second, the parametrized state $\rho_\gamma=\dya{\chi(\gamma)}$ with 
\begin{equation}
    \ket{\chi(\gamma)}=(\cos(\gamma/4)\ket{0}^{\otimes 4}+\sin(\gamma/4)\ket{1}^{\otimes 4})^{\otimes n/4}\,,
\end{equation}
which interpolates between a Gaussian state at $\gamma=0$ and a highly non-Gaussian state at $\gamma=\pi$. Importantly, for any strictly non-zero value of $\gamma$ the state has exponential  extent~\cite{dias2023classical,cudby2023gaussian}, meaning that its expansion in Gaussian states contains exponentially many terms. As such, we refer to this state as the extent state. And third, a  Haar random state $\rho_H=\dya{\psi_H}$.

\begin{figure}[t]
    \centering
\includegraphics[width=.9\linewidth]{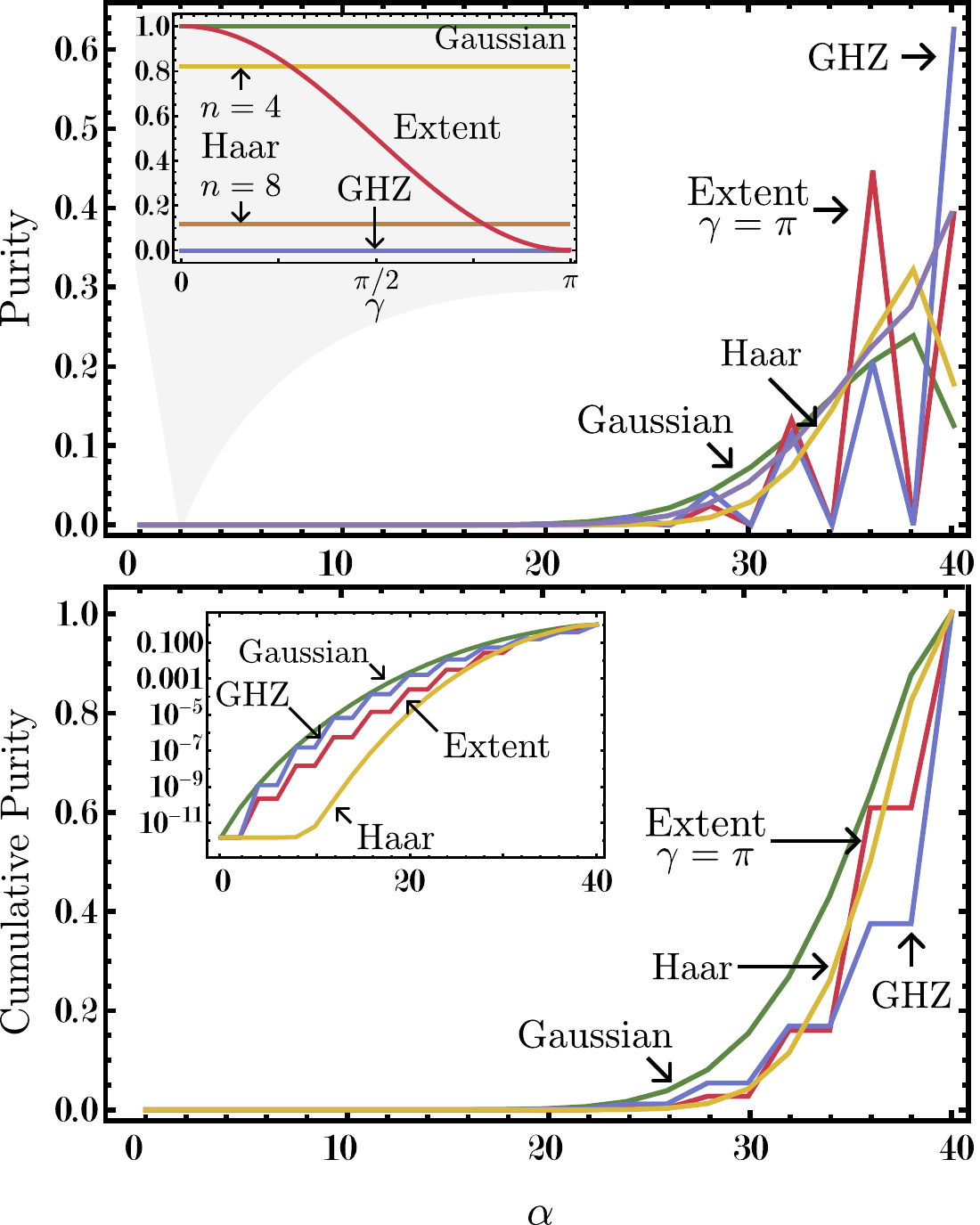}
    \caption{\textbf{GFD purities for fermionic Gaussianity.} Purity profiles for $\rho_G$, $\rho_{{\rm GHZ}}$, $\rho_{\gamma}$ and $\rho_H$ (defined in the main text), aggregated by irrep dimension. That is, for index $\alpha< n$, we add $\LC_{\alpha}+\LC_{2n-\alpha}$. We also only show purity for irreps with $\alpha$ even. We show representative result $n=40$ qubits. Inset depicts the normalized purity $\LC_{2}$ as a function of the extent parameter $\alpha$.  (Bottom) Cumulative purity for the same data in the top panel. Inset shows the same data on a log-linear scale.}
    \label{fig:fer}
\end{figure}

In Fig.~\ref{fig:fer} we show the purity profiles for the aforementioned states. We can again see from the top and bottom panels that as the states become non-Gaussian, the purity gets shifted towards higher-dimensional irreps. Notably, while the profile for a Gaussian and Haar random states are ``smooth'', both GHZ and the extent state (for $\gamma=\pi$) exhibit zig-zag profiles with zero purity in every-other irrep (with the GHZ accumulating purity in the smaller irreps faster than the extent state, as per the bottom panel).  However, a difference between the extent and GHZ state is that the former maximizes its purity in the third-to-last irreps, while the latter has maximal purity in the largest ones. In fact, we see that when the purities of the GHZ state are non-zero, they exactly match those of a Gaussian state (see the Methods), except of course on the largest irreps, where the GHZ accrues all of its mass. 

Then, in the inset of the top panel we show the normalized purity in $\LC_2$ as a function of the extent parameter $\gamma$. Here, we can see that the QRT's free states (i.e. Gaussian states) maximize the value of $\LC_2$. Since this irrep  corresponds to the complexification of the Lie algebra $R(\mathfrak{so}(2n))$ associated to the free operations, the inset showcases an instantiation of Theorem~\ref{theo:algebra-purity}. In fact, it was shown in~\cite{diaz2023showcasing} that such purity is a proper witness which quantifies the amount of  fermionic non-Gaussianity in the state~\cite{gigena2015entanglement, gigena2020one,gigena2021many}. Importantly, here we note that while the values of this purity as a function of $n$ are constant for  $\rho_G$, $\rho_{{\rm GHZ}}$, and $\rho_{\gamma}$ they decay exponentially for $\rho_H$. Moreover,  our results show that there is no monotonic relationship between the amount of extent in the state, and its fermionic non-Gaussianity, as measured through the GFD purity in $\LC_2$. In particular, the GHZ state has an extent of two, but it has zero component on $\LC_2$. On the other hand, the state $\ket{\chi(\gamma)}$ has  exponential extent~\cite{dias2023classical,cudby2023gaussian} as long as $\gamma$ is not strictly zero, but its purity in $\LC_2$ only decays as $\cos\left(\frac{\gamma}{2}\right)$. Hence, exponential-extent states can arise at arbitrarily low levels of non-Gaussianity.

To finish, we note that for Gaussian states, Theorem~\ref{prop:compress} claims that if there exists two states whose expectation values over $\LC_2$ (which corresponds to the complexified algebra) match, then they must be equal. This implies that the two states share the same correlation matrix $C$, which is defined for a state $\rho$ as the real anti-symmetric $2n \times  2n$ matrix with element $[C(\rho)]_{ij}=-i\Tr[\rho c_i c_j]$ for all $1\leq i< j<2n$. Indeed, combining the fact that the Gaussian states maximize the purity in $\LC_2$~\cite{diaz2023showcasing} along with the fact that Gaussian states are fully determined by their correlation matrix~\cite{mele2024efficient}, recovers the statement of Theorem~\ref{prop:compress}.

\subsubsection{Spin coherence}
In the spin coherence QRT the free operations are given by the spin $s=(d-1)/2$ irreducible representation of $\mathbb{SU}(2)$~\cite{barnum2004subsystem,robert2021coherent,perelomov1977generalized,zhang1990coherent}, acting on a $d$-dimensional Hilbert space $\HC=\mathbb{C}^{d}$. Since the group is irreducibly represented, all the states have the same purity on $\HC$, and we need to go up the hierarchy and study the GFD purities in $\LC(\HC)$. Here, one finds  that the operator space decomposes into $2s+1$ multiplicity-free irreps as 
\begin{equation}
\LC(\HC)=\bigoplus_{\alpha=0}^{2s}\LC_\alpha\,,
\end{equation}
where $\LC_s$ is a spin-$\alpha$ irrep of dimension $\dim(\LC_s)=2\alpha +1$. 

Then, we recall that the free states are those in the orbit of $\ket{s,s}$, whereas all the other states $\ket{s,m}$ with $m\neq \pm s$ are resourceful~\cite{barnum2004subsystem,ragone2022representation}. As such, we will will consider the free state $\rho_{s}=\dya{s,s}$, as well as the resourceful states $\rho_{0}=\dya{s,0}$, the GHZ state $\rho_{{\rm GHZ}}=\dya{{\rm GHZ}}$ with $
\ket{{\rm GHZ}}=\frac{1}{\sqrt{2}}(\ket{s,s}+\ket{s,-s}$, and a Haar random state $\rho_H=\dya{\psi_H}$. 
In Fig.~\ref{fig:su2} we show their associated GFD purity profiles. Once again, we can verify that as the state becomes more resourceful, purity shifts towards higher dimensional spin irreps. Notably, the purity of Haar random states increases linearly with $\alpha$ as indicated by the purity (top panel) and its cumulated version (bottom panel). Then, both the $\rho_{0}$ and GHZ states exhibit a zig-zag pattern, with that of $\rho_{0}$ increasing in magnitude as a function of $\alpha$, while that of the GHZ matches the values of the $\rho_{s}$ for $\alpha$ even (see the Methods). The latter explains why the GHZ cumulative purity plateaus as $\alpha$ increases and approaches $2s$  (since the purity in  $\rho_{s}$ is small for such irreps), which makes the jump in the cumulative purity arising from the largest irrep more dramatic. In the inset of the top panel we show the normalized purity in $\LC_1$, which is known to be a witness for the resourcefulness in a pure state as per Theorem~\ref{theo:algebra-purity}~\cite{barnum2004subsystem,ragone2022representation}. Here we can see that such purity is maximized for the free state $\ket{s,s}$, zero for the GHZ state, exponentially small (in $s$) for the Haar states, and decreasing as $\frac{m^2}{s^2}$ for $\rho_{m}$.

\begin{figure}[t]
    \centering
\includegraphics[width=.9\linewidth]{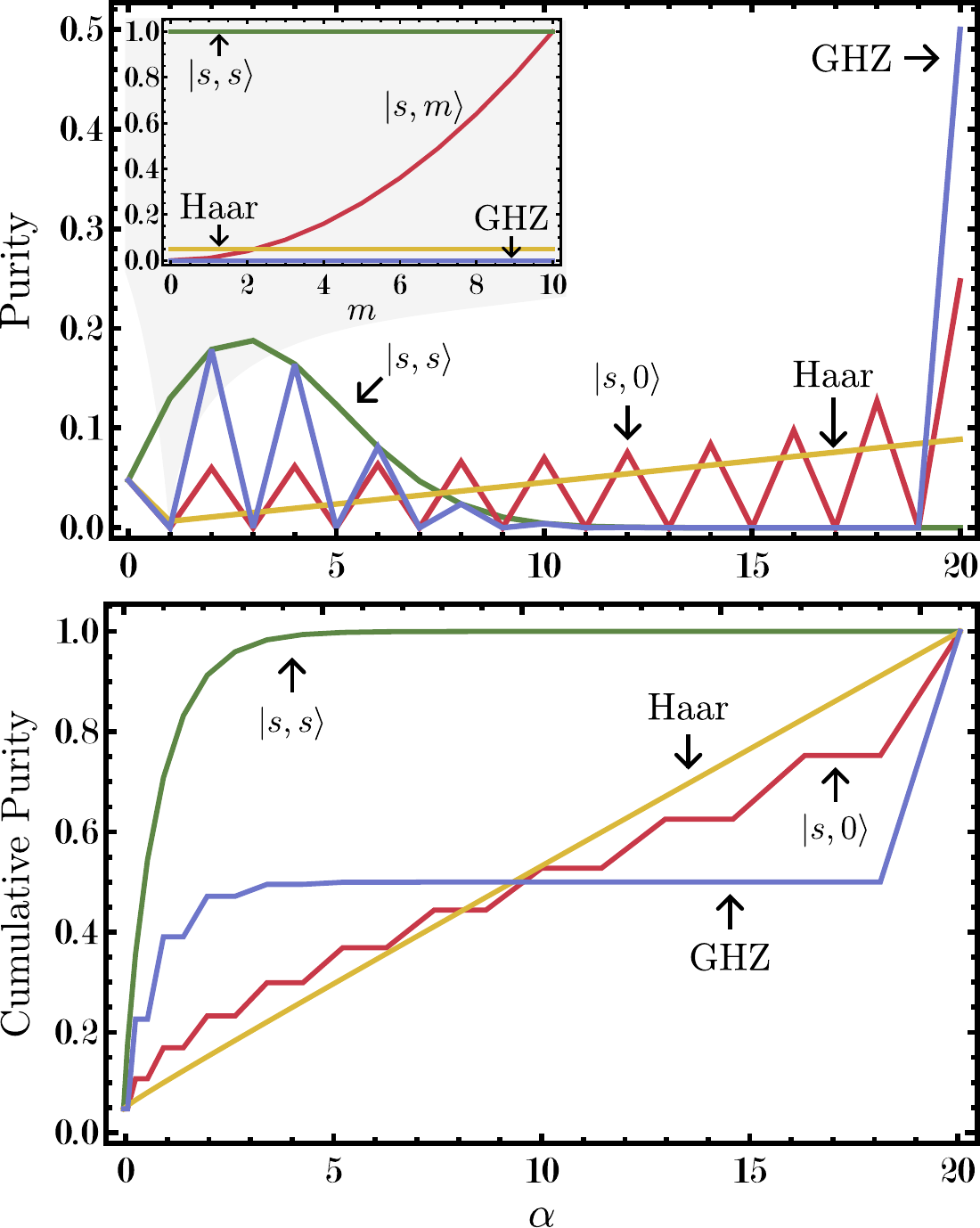}
    \caption{\textbf{GFD purities for spin coherence.} Purity profiles for $\rho_{s}$, $\rho_{0}$, $\rho_{{\rm GHZ}}$, and $\rho_H$. We show representative result for $s=10$. Inset depicts the normalized purity $\LC_{1}$ as a function of the extent index $m$ for $\rho_{m}$.  (Bottom) Cumulative purity for the same data in the top panel.}
    \label{fig:su2}
\end{figure}

\subsubsection{Clifford stabilizerness}

Up to this point we have presented results for QRTs where the free operations are given by the representation of a Lie group. However, we can also apply our GFD theory for the cases where the free operations form a discrete group. In particular, we will study the QRT for Clifford stabilizerness, or magic~\cite{veitch2014resource,howard2017application,leone2024stabilizer}. As such, let $\HC=(\mathbb{C}^2)^{\otimes n}$ be an $n$-qubit Hilbert space of dimension $d=2^n$, and let the free operations be given by the Clifford group $C_l$, i.e., the normalizer of the $n$-qubit Pauli group $P_n$ generated by $\{\pm1 ,\pm i\}\times \{\id,X,Y,Z\}^{\otimes n}$~\cite{mastel2023clifford}. Since Clifford unitaries form a $3$-design~\cite{webb2016clifford}, their irrep decomposition matches that of the full unitary group $\mathbb{U}(\HC)$, not only on $\HC$ itself but also on the spaces $\LC(\HC)$ and $\HC^{\otimes 2}$. Consequently, all pure states will have the same irrep decomposition in all irreps for those spaces, and we need to go up the hierarchy and study the irreps of $\LC(\HC^{\otimes 2})$. The operator space $\LC(\HC^{\otimes 2})$ splits into fourteen irreps as~\cite{helsen2018representations} 
\begin{align}
    \LC(\HC^{\otimes 2})=&\LC_{{\rm id}}\oplus\LC_{r}\oplus \LC_{l}\oplus \LC_{[A]}\oplus\LC_{0}\oplus\LC_{1}\oplus \LC_{2}\oplus \LC_{[{\rm adj}]}\nonumber\\
    &\oplus \LC_{\{{\rm adj}\}}\oplus \LC_{\{{\rm adj}^\perp\}}\oplus \LC_{[1]}\oplus \LC_{[2]}\oplus \LC_{\{1\}}\oplus \LC_{\{2\}}\,,\nonumber
\end{align}
whose description and dimensions are presented in Table~\ref{tab:irreps} of the Methods. 

As shown in our companion manuscript~\cite{mele2025clifford}, given two copies of a pure state  $\rho=\dya{\psi}$, one finds that $\forall \ket{\psi}\in\HC$
\begin{align}
&\PC_{{\rm id}}(\rho^{\otimes 2})=\frac{1}{4^n}\,, \quad \PC_{0}(\rho^{\otimes 2})=\frac{2^n-1}{4^n(4^n-1)}\,, \nonumber\\
&\PC_{ l}(\rho^{\otimes 2})=\PC_{ r}(\rho^{\otimes 2})=\frac{2^n-1}{4^n}\nonumber\\
    &\PC_{[A]}(\rho^{\otimes 2})=\PC_{2}(\rho^{\otimes 2})=\PC_{\{{\rm adj}\}}(\rho^{\otimes 2})=\PC_{\{{\rm adj}^\perp\}}(\rho^{\otimes 2})=0\,.\nonumber
\end{align}
More importantly, one can show that all remaining irrep purities are linearly dependent to that in $\LC_1$. That is, for all remaining irreps $\LC_\alpha$ one finds
\begin{equation}\label{eq:prop-irrep}
    \PC_\alpha(\rho^{\otimes 2})=a_\alpha\PC_1(\rho^{\otimes 2})+b_\alpha\,.
\end{equation}
Notably, we have shown in \cite{mele2025clifford} that the stabilizer entropy~\cite{leone2022stabilizer}, a  well-known measure of non-stabilizerness, can be found to be up-to-a constant, equal to a projection onto $ \LC_1$, implying that this GFD purity is a stabilizerness witness and monotone~\cite{leone2024stabilizer}. Such a notable result indicates that a generalized version of Theorem~\ref{theo:algebra-purity} could potentially be defined for non Lie-groups, whereby there exists a special irrep whose associated GFD purity is a witness. In addition, here we also note that since all GFD purities in $\LC(\HC^{\otimes 2})$ are proportional to that in $\LC_1$ as per Eq.~\eqref{eq:prop-irrep}, there is only a single  degree of freedom available, indicating that the structure of the stabilizer resourcefulness is quite simple. Particularly, when purity leaves $\LC_1$ it deterministically moves to the other irreps.

Then, consider the following states: a generic free stabilizer state $\rho_{\rm stab}=\dya{{\rm stab}}$ (including the GHZ state),  the $n$-qubit magic state $\rho_M=\dya{M}^{\otimes n}$ with  $\ket{M}=\frac{1}{\sqrt{2}}(\ket{0}+e^{i\pi/4}\ket{1})$, and  a Haar random state. As shown in the Methods, their purities in $\LC_1$ obey $\PC_1(\rho_M)<\PC_1(\rho_H)<\PC_1(\rho_{\rm stab})$, with $\PC_1(\rho_{\rm stab})$ being maximal. 

\subsection{GFD purities, extent and classical simulability methods}

In this section we study the connection between GFD purities and extent. For convenience, we recall that, in a nutshell, the extent is used to study the optimal decomposition of a resourceful state into a linear combination of free states~\cite{bravyi2019simulation,heimendahl2021stabilizer,fast2022pashayan,reardon2024fermionic,dias2023classical,reardon2023improved,cudby2023gaussian}, and plays a central role in the QRTs of stabilizerness and Gaussianity. In particular,  the larger the extent, the more free states are needed to decompose a resourceful state. Our work shows that across the considered Lie algebra-based QRTs there seems to be no direct connection between how resourceful a state is through the optics of extent, and through that of the projection into the irreps which form witnesses. For instance, in all cases the GHZ state has zero component on the witness irreps, but it can always be decomposed into a linear combination of only two free states. This is in complete contrast to the QRT of Clifford stabilizerness, where the inverse purity in the witness irrep is a lower bound on the extent~\cite{haug2023stabilizer}. The previous therefore indicates that exponentially small purity implies exponentially large extent. Such difference could arise from the fact that stabilizerness appears to be a somewhat ``simpler'' QRT than others. 

The previous distinction between the QRT of stabilizerness and those based on Lie-algebras could also have deep implications in the study of the classical simulability of quantum systems. This is due to the fact that the amount of extent in a state determines how hard it is to classically simulate it, e.g.,  via the Gottesman–Knill theorem for Clifford evolutions~\cite{gottesman1998theory,
bravyi2019simulation,dias2023classical}, or using Wick-Theorem based techniques~\cite{wick1950evaluation,valiant2001quantum,terhal2002classical} in free-fermionic matchgate circuits. As such, the non-equivalence between extent and purity in fermionic systems could be a strong indicator that there exist different types of fermionic correlations, and concomitantly, that different non-equivalent ways to simulate their free-fermionic evolutions should exist. Indeed, this conjecture is supported by the fact that it was recently shown in~\cite{goh2023lie} that Lie-algebraic simulation techniques (which are actually based on GFDs) can enable the simulation of states with exponentially large extent but large purity in the small irreps, a task which is intractable via standard Wick-based simulations.

\section{Discussions}

In this work we have presented a framework for studying the resourcefulness of quantum states via their decompositions into the irreps induced by the QRT's free operations. Notably, our results reveal that some of the behaviors arising in classical data harmonic analysis still appear when studying quantum states. For instance, low-complexity resource-free states have large support, and can be described by their components, in the smaller dimensional irreps. On the other hand, highly-resourceful and complex states accumulate their probability mass in the larger dimensional irreps. In addition, we also identified universal behaviors across multiple QRTs, such as irrep profiles exhibiting zig-zag patterns for highly structured resourceful states. Indeed, these patterns enable states to have zero purity in certain irreps, and to push this probability mass into the larger irreps.   
Hence, the GFD framework not only provides a toolbox to characterize the resourcefulness fingerprints in states across different QRTs, but given the fact that the GFD purities carry operational meaning, it also enables new dimensions of study for constructing witnesses and state compression techniques. 

Then, it is also worth mentioning that our framework provides tools to further understand the power and limitations of certain quantum computational models. For instance, the GFD purities have been recently shown to appear within the context of quantum machine learning~\cite{biamonte2017quantum,schuld2015introduction,cerezo2020variationalreview,cerezo2022challenges} as determining whether a learning model will be trainable or not~\cite{ragone2023unified,fontana2023theadjoint,larocca2024review}. In fact, it can be shown that many algorithms are only provably trainable for states and observables that live within the smaller dimensional irreps ~\cite{bermejo2024quantum, braccia2024computing}. From the optics of GFD, this establishes a clear connection between trainability and resourcefulness, so that the more resourceful a state is, the harder it will be to train a learning model on it.

Going beyond our results, our work also provides blueprints for several future research directions. These include the study of GFD purities in other QRTs (we point the reader to our companion paper \cite{deneris2025analyzing}, where GFD purities are employed to study several families of well-known states through the lens of different QRTs) and in higher number of copies (i.e., going up the hierarchy), the  exploration of GFDs for mixed states, the study of other properties of the irrep projections (beyond their norm), whether linear combinations of GFD purities can provide different types of information~\cite{schatzki2022hierarchy},  how purities change under non-unitary resource-non increasing operations (we refer the reader to our companion manuscript~\cite{nahuel2025quantum} where we discuss how such operations could be obtained in general QRTs), the connection between purities to resourcefulness distillation and conversion rates, and whether comparisons between two GFD probability distribution encode information useful to compare states. In addition, one can also explore  GFDs  in regular representations or those arising from lifting the states to the group, i.e., re-expressing them in terms of group elements. Finally, and as evidenced from Theorem~\ref{theo:algebra-purity}, it is clear that the GFD purities can contain large quantities of redundant information (e.g., free states are fully described by and compressible to one irrep), indicating that there likely exists other types of analysis which reveal whether the information on larger irreps is redundant or not.  As evidenced from this non-exhaustive list of future research directions, we are confident that our proposed formalism can lead to useful, and new insights regarding QRTs, and more generally, quantum computation and information sciences.

\section{Methods}

\subsection{GFD purities and trivial irreps of $V^{\otimes 2}$}

As mentioned in the main text, one can understand the GFD purities of Definition~\ref{def:irrep-purity} as arising from projectors into the trivial components of the irrep decomposition of the two-fold tensor product  of $V$.  
To see why this is the case, let us consider  $V^{\otimes 2}$, and note that the representation $R$ induces the tensor product representation $R':G\rightarrow \mathbb{GL}(V^{\otimes 2})$ whose action is given by $R'(g)=R(g)\otimes R(g)$. In particular, we recall that one can use Eq.~\eqref{eq:block-diagonalization} as a starting point to build the irrep decomposition of $V^{\otimes 2}$ under the action of $R'$ since
\begin{equation}
    V^{\otimes 2} = \bigoplus_{\alpha,\alpha'} \mathbb{C}^{m_\alpha} \otimes \mathbb{C}^{m_{\alpha'}}\otimes V^{\alpha}_G\otimes V^{\alpha'}_G\simeq \bigoplus_{\widetilde{\alpha}} \mathbb{C}^{m_{\widetilde{\alpha}}} \otimes V^{\widetilde{\alpha}}_G\,,
\end{equation}
where $\widetilde{\alpha}$ are the irreps that appear within the decomposition of $V^{\otimes 2}$ and which arise from further decomposing each tensor product $V^{\alpha}_G\otimes V^{\alpha'}_G$\footnote{We recall that one of the basic primitives of representation theory is the fact that one can obtain new irreps from the tensor product of two irreps}. In particular, from Shur's Lemma~\cite{hall2013lie}, we know that if $V^{\alpha}_G$ is isomorphic to $V^{\alpha'}_G$ (which in particular occurs if $\alpha=\alpha'$), then the decomposition of $V^{\alpha}_G\otimes V^{\alpha'}_G$ will contain (at least one) trivial irrep. Next, since the inner product $\langle\cdot,\cdot\rangle$ in $V$ induces an inner product in $V^{\otimes 2}$ as $\langle v_1\otimes v_2,w_1\otimes w_2\rangle=\langle v_1,w_1\rangle\langle v_2,w_2\rangle$, we find that the GFD purities can be expressed as 
\begin{align}
     \PC_{j,\alpha}(v)&=\sum_{i=1}^{\dim(V_{\alpha,j})}\left\langle \left(w_{\alpha,j}^{(i)}\right)^{\otimes 2},v^{\otimes 2}\right\rangle\nonumber\\
     &=\left\langle \Pi_{j,\alpha} ,v^{\otimes 2}\right\rangle\,.
\end{align}
Above we have defined $\Pi_{j,\alpha}=\sum_{i=1}^{\dim(V_{\alpha,j})} (w_{\alpha,j}^{(i)})^{\otimes 2}$. Importantly, the invariance of the purities in~\eqref{eq:invariance-purity} implies that $\Pi_{j,\alpha}$ must belong to the trivial irreps --also known as the group's representation commutant-- appearing in $V^{\alpha}_G\otimes V^{\alpha}_G$. Indeed, this implies the following  theorem
\begin{theorem}[ GFD purities and trivial irreps]\label{theo:trivial}
    The GFD purities can be expressed as a summation of the projections of $v^{\otimes 2}$ onto trivial irreps of $V^{\otimes 2}$ under the two-fold tensor product representation of $R$.
\end{theorem}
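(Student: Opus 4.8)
The plan is to make precise the chain of identifications already sketched in the paragraph immediately preceding the theorem statement, so that the proof is essentially a matter of assembling pieces. First I would fix notation: let $R:G\to\mathbb{GL}(V)$ be the given representation with block-diagonal form~\eqref{eq:block-diagonalization}, and let $R':G\to\mathbb{GL}(V^{\otimes 2})$, $R'(g)=R(g)\otimes R(g)$, be the induced tensor-product representation. For each irrep copy $V_{\alpha,j}\subseteq V$ with orthonormal basis $\{w_{\alpha,j}^{(i)}\}_i$, define $\Pi_{j,\alpha}=\sum_{i=1}^{\dim(V_{\alpha,j})}(w_{\alpha,j}^{(i)})^{\otimes 2}\in V^{\otimes 2}$, exactly as in the excerpt. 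The goal is to show (i) $\PC_{j,\alpha}(v)=\langle \Pi_{j,\alpha},v^{\otimes 2}\rangle$, and (ii) each $\Pi_{j,\alpha}$ lies in the isotypic component of the trivial irrep inside $V^{\otimes 2}$ under $R'$; together these give the claimed statement that every GFD purity is the inner product of $v^{\otimes 2}$ with a vector in the trivial subspace, i.e.\ a projection onto trivial irreps (summed, since a given irrep type may contribute several trivial copies, one per multiplicity pair).

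For step (i), I would just expand: by Definition~\ref{def:irrep-purity}, $\PC_{j,\alpha}(v)=\sum_i |\langle w_{\alpha,j}^{(i)},v\rangle|^2=\sum_i \langle w_{\alpha,j}^{(i)},v\rangle \overline{\langle w_{\alpha,j}^{(i)},v\rangle}$, and using the induced inner product $\langle a_1\otimes a_2,b_1\otimes b_2\rangle=\langle a_1,b_1\rangle\langle a_2,b_2\rangle$ on $V^{\otimes 2}$ this equals $\sum_i \langle (w_{\alpha,j}^{(i)})^{\otimes 2},v^{\otimes 2}\rangle=\langle \Pi_{j,\alpha},v^{\otimes 2}\rangle$ by linearity in the first slot. (One should be mild about conjugation conventions here: with a conjugate-linear-in-the-first-argument convention the $v^{\otimes 2}$ and $w^{\otimes 2}$ each pick up the matching conjugation; in a real inner-product space this is automatic.) For step (ii), the key input is the $G$-invariance of the purities, Eq.~\eqref{eq:invariance-purity}: since $\PC_{j,\alpha}(R(g)v)=\PC_{j,\alpha}(v)$ for all $g$ and all $v$, and the left side equals $\langle \Pi_{j,\alpha},(R(g)v)^{\otimes 2}\rangle=\langle \Pi_{j,\alpha},R'(g)v^{\otimes 2}\rangle=\langle R'(g)^{\dagger}\Pi_{j,\alpha},v^{\otimes 2}\rangle$, and since $R$ restricted to each $V_{\alpha,j}$ is unitary so that $R'(g)$ is unitary on $V^{\otimes 2}$, we get $\langle R'(g)^{-1}\Pi_{j,\alpha}-\Pi_{j,\alpha},v^{\otimes 2}\rangle=0$ for all $v$. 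Because vectors of the form $v^{\otimes 2}$ span the symmetric subspace $\mathrm{Sym}^2(V)$, and $\Pi_{j,\alpha}$ is itself symmetric, this forces $R'(g)\Pi_{j,\alpha}=\Pi_{j,\alpha}$ for all $g$, i.e.\ $\Pi_{j,\alpha}$ is a $G$-invariant vector, hence lies in the trivial isotypic component of $V^{\otimes 2}$. Finally, by orthonormality of the $w_{\alpha,j}^{(i)}$ across different $(\alpha,j)$ one checks the $\Pi_{j,\alpha}$ are mutually orthogonal, so the decomposition~\eqref{eq:comp} of the purities, $\sum_{\alpha,j}\PC_{j,\alpha}(v)=\langle \sum_{\alpha,j}\Pi_{j,\alpha},v^{\otimes 2}\rangle$, realizes the total norm as the projection of $v^{\otimes 2}$ onto the full trivial subspace, with the individual purities being the summands coming from each trivial irrep.

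The main obstacle, such as it is, is conceptual rather than technical: one must be careful that ``trivial irreps of $V^{\otimes 2}$'' means the $G$-fixed subspace of $V^{\otimes 2}$ (the commutant-type subspace referred to in the text), and that a single irrep type $V^{\widetilde\alpha}_G$ appearing with multiplicity $m_{\widetilde\alpha}$ in $V$ can contribute more than one trivial copy to $V^{\otimes 2}$ — one for each ordered pair of multiplicity indices, matching the copies $V_{\alpha,j}$. So the statement is really that $\PC_{j,\alpha}(v)$ is the projection of $v^{\otimes 2}$ onto the \emph{particular} trivial line spanned by $\Pi_{j,\alpha}$, and the aggregate is a sum over all such lines. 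I would also note for completeness that one could phrase the same argument via Schur's lemma (as hinted in the footnote about tensor products of irreps): $V^{\alpha}_G\otimes V^{\alpha}_G$ contains a canonical trivial summand spanned by the ``maximally entangled'' vector $\sum_i w^{(i)}\otimes w^{(i)}$, and $\Pi_{j,\alpha}$ is precisely this vector sitting inside the $(j,\alpha)$-block; invariance then follows because the trivial representation is, by definition, fixed. Either route is short; the substance of the theorem is the identification $\PC_{j,\alpha}(v)=\langle\Pi_{j,\alpha},v^{\otimes 2}\rangle$ together with the observation that invariance pins $\Pi_{j,\alpha}$ to the trivial sector.
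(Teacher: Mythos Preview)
Your proposal is correct and follows essentially the same approach as the paper: rewrite $\PC_{j,\alpha}(v)$ as $\langle \Pi_{j,\alpha},v^{\otimes 2}\rangle$ with $\Pi_{j,\alpha}=\sum_i (w_{\alpha,j}^{(i)})^{\otimes 2}$, then use the invariance of the purities (Eq.~\eqref{eq:invariance-purity}) to conclude that $\Pi_{j,\alpha}$ lies in the trivial isotypic component of $V^{\otimes 2}$. You have in fact supplied more detail than the paper's sketch---in particular the symmetric-subspace spanning argument to pass from $\langle R'(g)^{-1}\Pi_{j,\alpha}-\Pi_{j,\alpha},v^{\otimes 2}\rangle=0$ for all $v$ to $R'(g)\Pi_{j,\alpha}=\Pi_{j,\alpha}$, and the remark about conjugation conventions---but these are refinements of the same route rather than a different one.
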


\subsection{Bipartite entanglement in two-qubit states}

Consider  any product, non-resourceful,  state $\rho_P=\dya{\psi_1}\otimes \dya{\psi_2}$. We can readily find that its purity in the irreps of Eq.~\eqref{eq:irreps-L-std-ent} are given by
\small
\begin{equation}\label{eq:irreps-L-std-ent}
    \PC_{(0,0)}(\rho_P)=\frac{1}{4}\,, \PC_{(1,0)}(\rho_P)=\PC_{(0,1)}(\rho_P)=\frac{1}{4}\,, \PC_{(1,1)}(\rho_P)=\frac{1}{4}\,.\nonumber
\end{equation}
\normalsize
Then, the maximally entangled Bell state $\rho_B=\dya{\Phi}$, with $\ket{\Phi}=\frac{1}{\sqrt{2}}(\ket{00}+\ket{11})$ satisfies
\small
\begin{equation}
    \PC_{(0,0)}(\rho_B)=\frac{1}{4}\,, \PC_{(1,0)}(\rho_B)=\PC_{(0,1)}(\rho_B)=0\,, \PC_{(1,1)}(\rho_B)=\frac{3}{4}\,.\nonumber
\end{equation}
\normalsize
Then, consider a parametrized state which interpolates between product and maximally-entangled given by $\rho_{\theta}=\dya{\psi(\theta)}$, $\ket{\psi(\theta)}=\cos(\frac{\theta}{2})\ket{00}+\sin(\frac{\theta}{2})\ket{11}$ for $\theta\in [0,\pi/2]$. Now, one obtains
 \small
\begin{align}
    &\PC_{(0,0)}(\rho_\theta)=\frac{1}{4}\,, \PC_{(1,0)}(\rho_\theta)=\PC_{(0,1)}(\rho_\theta)=\frac{\cos(\theta)^2}{4}\,,\nonumber\\ &\PC_{(1,1)}(\rho_\theta)=\frac{2-\cos(2\theta)}{4}\,.\nonumber
\end{align}
\normalsize
Finally, for a two-qubit Haar random state $\rho_H=\dya{\psi_H}$, with $\ket{\psi_H}$ sampled according to the Haar measure over $\HC$, we obtain that, in average,
\small
\begin{equation}
    \PC_{(0,0)}(\rho_H)\!=\frac{1}{4}\,, \PC_{(1,0)}(\rho_H)\!=\PC_{(0,1)}(\rho_H)\!=\!\frac{3}{20}\,, \PC_{(1,1)}(\rho_H)\!=\!\frac{9}{20}.\nonumber
\end{equation}
\normalsize

\subsection{Multipartite entanglement in $n$-qubit states}

Consider the irrep decomposition of Eq.~\eqref{eq:irreps-L-nq}, and let us define $k$ the Hamming weight of the irreps $\LC_\alpha$, i.e. $k=w(\a)$. Moreover, let us define
\begin{equation}
   \PC_{k}(\rho)= \sum_{\a:\,w(\alpha)=k} \PC_{\alpha}(\rho)\,,
\end{equation}
the aggregated purities of a state $\rho$ over all the irreps $\LC_\a$ with fixed Hamming weight $k$.
For any pure tensor product state $\rho_P=\bigotimes_{\mu=1}^n \dya{\psi_\mu}$ we find
\begin{equation}\label{eq:pur-TP}
   \PC_{k}(\rho_P)=\frac{\binom{n}{k}}{2^n}\,.
\end{equation}
Then, for the GHZ state $\rho_{{\rm GHZ}}=\dya{{\rm GHZ}}$ with $
    \ket{{\rm GHZ}}=\frac{1}{\sqrt{2}}(\ket{0}^{\otimes n}+\ket{1}^{\otimes n})$, we obtain that, for $k<n-1$  
\begin{equation}\label{eq:pur-GHZ-ent}
    \PC_{k}(\rho_{\rm GHZ})=\begin{cases}
        \frac{\binom{n}{k}}{2^n}\, & \text{if $k$ is even}\\
        0\, &\text{if $k$ is odd}
    \end{cases}\,.
\end{equation}
The cases $k=n-1,n$ depend on the parity of $n$. For $n$ even and  $k=n-1$, one finds $\PC_{\alpha}(\rho_{\rm GHZ})=0$, whereas it turns out that $\PC_{\alpha}(\rho_{\rm GHZ})=\frac{1}{2}+\frac{1}{2^n}$ if $k=n$. For odd $n$ and $k=n-1$, then $
    \PC_{\alpha}(\rho_{\rm GHZ})=\frac{1}{2^n}$, while $
    \PC_{\alpha}(\rho_{\rm GHZ})=\frac{1}{2}$ if $k=n$.
Hence 
\begin{align}\label{eq:pur-GHZ-ent-continued}
    \PC_{n-1}(\rho_{\rm GHZ})&=\begin{cases}
        0\, & \text{if $n$ is even}\\
        \frac{1}{2^n}\, &\text{if $n$ is odd}
    \end{cases}\,,\\
    \PC_{n}(\rho_{\rm GHZ})&=\begin{cases}
        \frac{1}{2}+\frac{1}{2^n}\,, & \text{if $n$ is even}\\
        \frac{1}{2}\,, &\text{if $n$ is odd}
    \end{cases}\,.
\end{align}

Next, consider the $W$ state $\rho_{{\rm W}}=\dya{{W}}$ with $\ket{W}=\frac{1}{\sqrt{n}}
    \sum_{\mu=1}^{n}X_i|0 \rangle^{\otimes n}$. We now find
\begin{equation}
\PC_{k}(\rho_{{\rm W}}) = \frac{(n-2k)^2 + 8\binom{k}{k-2}}{n^2 2^n}\binom{n}{k} \,.
\end{equation}
In addition, let $\rho_H=\dya{\psi_H}$ be an $n$-qubit Haar random state, where $\ket{\psi_H}$ was sampled according to the Haar measure over $\HC$. Then, we obtain that, in average
\begin{equation}
    \mathbb{E}_{\ket{\psi_H}}[\PC_{k}(\rho_H)]=\begin{cases}
        \frac{1}{2^n} &\text{if $k=0$}\\
        \frac{3^{k}\binom{n}{k}}{2^n(2^{n}+1)} &\text{if $k>0$}
    \end{cases}\,.
\end{equation}

Finally, we also note that when they exist, absolutely maximally entangled states~\cite{helwig2013absolutely,goyeneche2015absolutely,huber2017absolutely}, denoted as $\rho_{{\rm AME}}$ constitute the limiting case of GFD purities as they satisfy
\begin{equation}
    \PC_{k}(\rho_{{\rm AME}})=\begin{cases}
        \frac{1}{2^n} & \text{if $k=0$}\\
        0 & \text{if $0<k<n$}\\
        \frac{2^n-1}{2^n} & \text{if $k=n$}
    \end{cases}\,.
\end{equation}
That is, they maximize the  purity is concentrated in the largest irrep.  

\subsection{Fermionic Gaussianity}

Consider the irrep decomposition of Eq.~\eqref{eq:irrep-L-ferm}. We will consider only (spinless) fermionic states, i.e. states with well defined parity, corresponding to eigenstates of the parity operator $Z^{\otimes n}$, with $2n$ being the number of Majorana modes, and $n$ being the number of qubits after a Jordan-Wigner transformation. Furthermore, let us pick $n$ as a multiple of four, and restrict our analysis to the even parity sector (eigenvectors of $Z^{\otimes n}$ with eigenvalue $+1$). As a consequence, the GFD purities in the irreps $\LC_\a$ with $\a$ odd are automatically zero. The odd parity case can be recovered analogously.

Let us start by studying this QRT's free states. For a  pure Gaussian state $\rho_G$ we find
\begin{equation}\label{eq:pur-gauss}
\PC_\alpha(\rho_G)=\frac{\binom{n}{\alpha/2}}{2^n}\,.
\end{equation}

Next, we consider three types of non-Gaussian state. We start with the GHZ state that we already studied from the point of view of the entanglement QRT. Notice that $\ket{{\rm GHZ}}$ can be expressed as the linear combination of two Gaussian states, $\ket{0}^{\otimes n}$ and $\ket{1}^{\otimes n}$. Here we find 

\smaller
\begin{equation}\label{eq:ghz_fent}
    \PC_\alpha(\rho_{{\rm GHZ}})=\begin{cases}
        0 & \text{if $\a=2(2j-1)\neq n$ for 
        $j=1,\dots,\lfloor\frac{n}{2}\rfloor$}\\
        \frac{\binom{n}{\alpha/2}}{2^n} &\text{if $\alpha=4j \neq n$ for $j=0,\dots,\lfloor\frac{n}{2}\rfloor$}\\
        \frac{2^{n-1}+\binom{n}{n/2}}{2^n} & \text{if $\alpha=n$}
    \end{cases}\hspace{-0.25cm}\,.
\end{equation}
\normalsize

That is, for the GHZ state, the GFD purities exhibit a zig-zag pattern: They vanish whenever
$\alpha = 2(2j-1)$ (even but not a multiple of 4), take nonzero values when $\alpha$ is a multiple of four, and are mostly concentrated in the largest irrep $\LC_n$.

Next, we consider the parametrized state $\rho_\gamma=\dya{\chi(\gamma)}$ with 
\begin{equation}
    \ket{\chi(\gamma)}=(\cos(\gamma/4)\ket{0}^{\otimes 4}+\sin(\gamma/4)\ket{1}^{\otimes 4})^{\otimes n/4}\,,
\end{equation}
which interpolates between a Gaussian state at $\gamma=0$ and a highly non-Gaussian state at $\gamma=\pi$. We here find
\small
\begin{widetext}
\begin{align}
   \PC_\alpha(\rho_\gamma)=\frac{1}{2^n}\sum_{i_8=0}^{\lfloor \frac{\alpha}{8} \rfloor} \sum_{i_6=0}^{\lfloor \frac{\alpha}{6} \rfloor} \sum_{i_4=0}^{\lfloor \frac{\alpha}{4} \rfloor} \sum_{i_2=0}^{\lfloor \frac{\alpha}{2} \rfloor}
    \binom{\frac{n}{4}}{i_8} \binom{\frac{n}{4}-i_8}{i_6}  \binom{\frac{n}{4}-i_8-i_6}{i_4} \binom{\frac{n}{4}-i_8-i_6-i_4}{i_2}\,  N_2(\gamma)^{i_2}N_4(\gamma)^{i_4}N_6(\gamma)^{i_6}N_8^{i_8}\,\delta_{\a,8i_8+6i_6+4i_4+2i_2}\,.
\end{align}
\end{widetext}
\normalsize
where we defined 
\begin{align}
    N_2(\gamma)&=\binom{4}{1}\cos^2(\gamma/2)\,,\\
    N_4(\gamma)&=\left(\binom{4}{2}+\left(\binom{4}{2}+2\right)\sin^2(\gamma/2)\right)\,,\nonumber\\
    N_6(\gamma)&=\binom{4}{1}\cos^2(\gamma/2)\,, \quad N_8 =1\,.\nonumber
\end{align}
Finally, consider a Haar random state $\rho_H=\dya{\psi_H}$ with even fermionic parity. That is, $\ket{\psi_H}$ is sampled according to the Haar measure over $\HC_+$. We obtain, on average
\begin{equation}
    \mathbb{E}_{\ket{\phi}\sim\HC_+}[ \PC_\a(\dya{\phi})]\approx\begin{cases}
        0 & \text{if $\a$ is odd}\\
        \frac{1}{2^n} & \text{if $\a=0$}\\
        \frac{2\binom{2n}{\a}}{2^n(2^n+1)}  & \text{if $0<\a<2n$ is even} \\
        \frac{3}{2^n(2^n+1)}  & \text{if $\a=2n$}\
    \end{cases}
\end{equation}
Note that this result is not exact, as we have neglected corrections that are exponentially small in $n$ relative to the values shown.

\subsection{Spin coherence}

Here we begin by recalling that in this QRT, free states are those in the orbit of $\ket{s,s}$, while  all the other basis states $\ket{s,m}$ with $m\neq \pm s$ are resourceful. Defining $\rho_{m}=\dya{s,m}$,  we find 
\begin{align}\label{eq:pur-sm}
    \PC_{\alpha}(\rho_{m})=\left( c_{m,-m,0}^{s,s,\alpha}\right)^2\,,
\end{align}
where $c_{m_1,m_2,m}^{s_1,s_2,s}$ is a Clebsch-Gordan coefficient. For the free states one then explicitly finds
\begin{align}\label{eq:pur-ss}
    \PC_{\alpha}(\rho_{s})=\frac{(2 \a+1) ((2 s)!)^2}{(2 s-\a)! \Gamma (2 s+\a+2)}\,.
\end{align}

Next, we study again the GHZ state. We note that we can re-write it in terms of the basis states $\ket{s,m}$ as $\rho_{{\rm GHZ}}=\dya{{\rm GHZ}}$ with $
\ket{{\rm GHZ}}=\frac{1}{\sqrt{2}}(\ket{s,s}+\ket{s,-s})$, so that if $\alpha\neq 2s$  we get
\begin{align}\label{eq:pur-ghz-su2}
    \PC_{\alpha}(\rho_{{\rm GHZ}})&=\frac{1}{4}\left( c_{m,-m,0}^{s,s,\alpha}+(-1)^{2 s}  c_{m,-m,0}^{s,s,\alpha}\right)^2\nonumber\\
    &=\begin{cases}\left( c_{m,-m,0}^{s,s,\alpha}\right)^2\,  &   \text{ if $\alpha$ is even}\\
    0\, & \text{if $\alpha$ is odd}
    \end{cases}\,,
\end{align}
while for $\alpha=2s$ 
\begin{align}
    \PC_{2s}(\rho_{{\rm GHZ}})=&\frac{1}{4}\left( c_{s,-s,0}^{s,s,2s}+(-1)^{2 s}  c_{-s,s,0}^{s,s,2s}\right)^2\nonumber\\
    &+\frac{1}{4} (c_{-s,-s,-2s}^{s,s,2s})^2 + (c_{s,s,2s}^{s,s,2s})^2\,.
\end{align}
Finally, for a Haar random state $\rho_H=\dya{\psi_H}$,  where $\ket{\psi_H}$ is sampled according to the Haar measure over $\HC$, we obtain
\small
\begin{align}
    \mathcal{P}_{\alpha}(\rho_H) = 
    \begin{cases}
        \dfrac{1}{(2s + 1)(2s + 2)} + \dfrac{1}{2s + 2}\, & \text{if } \alpha = 0 \\
        \dfrac{2\alpha + 1}{(2s + 1)(2s + 2)}\, & \text{if } \alpha > 0
    \end{cases}
\end{align}

\normalsize

\subsection{Clifford stabilizerness}

As discussed in the main text, the GFD purities arising from the irreps of $\LC(\HC^{\otimes2})$ under the action of the Clifford group, whose definition is presented in Table~\ref{tab:irreps}, are all linearly dependent on the purity in the smallest non-trivial irrep $\LC_1$. As such, we only need to compute the purity therein. 
Let $\HC=(\mathbb{C}^2)^{\otimes n}$ be the Hilbert space of $n$ qubits, and let $\rho_{\rm stab}=\dya{{\rm stab}}$ be a generic free stabilizer state, which includes the GHZ state. For these states we find
\begin{equation}
    \LC_1(\rho_{\rm stab}^{\otimes 2})=\frac{1}{2^n}\,,
\end{equation}
Then, for a Haar random state $\rho_H=\dya{\psi_H}$,  where $\ket{\psi_H}$ is sampled according to the Haar measure over $\HC$ one finds~\cite{leone2022stabilizer} 
\begin{equation}
    \LC_1(\rho_H^{\otimes 2})=\frac{4}{2^n(2^n+3)}\,.
\end{equation}
Lastly, defining the single qubit magic state $\ket{M}=\frac{1}{\sqrt{2}}(\ket{0}+e^{i\pi/4}\ket{1})$, and the $n$-qubit state $\rho_M=\dya{M}^{\otimes n}$, one obtains 
\begin{equation}
    \LC_1(\rho_M^{\otimes 2})=\left(\frac{3}{16}\right)^n\,.
\end{equation}

\begin{table*}[t]
    \centering
\begin{tabular}{|c|c|c|}\hline
Irrep  & Operators in the basis of the irrep &Dimension \\\hline\hline
$\LC_{{\rm id}}$   & $\left\{\id\otimes \id\right\}$ &  $1$  \\\hline
$\LC_{ r}$   & $\left\{\sum_{\sigma \in \SC }\id\otimes \sigma\right\}$ &  $d^2-1$  \\\hline
$\LC_{ l}$   & $\left\{\sum_{\sigma \in \SC }\sigma\otimes \id\right\}$ &  $d^2-1$  \\\hline
$\LC_{[A]}$   & $\left\{\sigma\otimes \tau-\tau\otimes \sigma\,,\,\,\sigma\in\CC_\tau\,,\,\,\tau\in\SC\right\}$ &  $\frac{d^2-1}{2}\left(\frac{d^2}{2}-2\right)$  \\\hline
$\LC_{0}$   & $\left\{\sum_{\sigma \in \SC} \sigma\otimes\sigma\right\}$ &  $1$  \\\hline
$\LC_{1}$   & $\left\{\sum_{\sigma \in \SC }\lambda_\sigma\sigma\otimes\sigma\,|\,\sum_{\sigma \in \SC} \lambda_\sigma=0\,, \,\, \sum_{\sigma \in \NC_\tau}\lambda_\sigma=-\frac{d}{2}\lambda_\tau\,,\,\, \tau\in \SC\right\}$ &  $\frac{d(d+1)}{2}-1$  \\\hline
$\LC_{2}$   & $\left\{\sum_{\sigma \in \SC }\lambda_\sigma\sigma\otimes\sigma\,|\,\sum_{\sigma \in \SC} \lambda_\sigma=0\,, \,\, \sum_{\sigma \in \NC_\tau}\lambda_\sigma=\frac{d}{2}\lambda_\tau\,,\,\, \tau\in \SC\right\}$ &  $\frac{d(d-1)}{2}-1$  \\\hline
$\LC_{[{\rm adj}]}$   & ${\rm span}_{\mathbb{C}}\left\{\sum_{\sigma \in \CC_\tau }\sigma\otimes \sigma\cdot\tau+\sigma\cdot\tau\otimes \sigma\right\}$ &  $d^2-1$  \\\hline
$\LC_{\{{\rm adj}\}}$   & $\left\{\sum_{\sigma \in \CC_\tau }\sigma\otimes \sigma\cdot\tau-\sigma\cdot\tau\otimes \sigma\right\}$ &  $d^2-1$  \\\hline
$\LC_{\{{\rm adj}^\perp\}}$   & $\left\{A=\sigma\otimes \tau-\tau\otimes \sigma\,,\,\,\sigma\in\NC_\tau\,,\,\,\tau\in\SC\,|\,\Tr[B\ad A]=0\,,\forall B\in\LC_{\{{\rm adj}\}} \right\}$ &  $(d^2-1)\left(\frac{d^2}{2}-2\right)$  \\\hline
$\LC_{[1]}$   & $\left\{\sum_{\sigma \in \NC_\tau }\lambda_\sigma\sigma\otimes\sigma\,|\,\sum_{\sigma \in \CC_\tau\cap\NC_\pi} \lambda_\sigma=-d\lambda_\pi\,, \,\, \pi\in\CC_\tau\,,\,\, \tau\in \SC\right\}$ &  $(d^2-1)\left(\frac{d(d+2)}{8}-1\right)$  \\\hline
$\LC_{[2]}$   & $\left\{\sum_{\sigma \in \NC_\tau }\lambda_\sigma\sigma\otimes\sigma\,|\,\sum_{\sigma \in \NC_\tau\cap\NC_\pi} \lambda_\sigma=d\lambda_\pi\,, \,\, \pi\in\CC_\tau\,,\,\, \tau\in \SC\right\}$ &  $(d^2-1)\left(\frac{d(d-2)}{8}-1\right)$  \\\hline
$\LC_{\{1\}}$   & $\left\{\sum_{\sigma \in \NC_\tau }\lambda_\sigma(\sigma\otimes \sigma\cdot\tau+\sigma\cdot\tau\otimes \sigma)\,|\,\sum_{\sigma \in \NC_\tau\cap\CC_\pi} \lambda_\sigma-\sum_{\sigma \in \NC_\tau\cap\NC_\pi} \lambda_\sigma=\frac{d}{2}\lambda_\pi\,, \,\, \pi\in\NC_\tau\,,\,\, \tau\in \SC\right\}$ &  $(d^2-1)\left(\frac{d(d+2)}{8}\right)$  \\\hline
$\LC_{\{2\}}$   & $\left\{\sum_{\sigma \in \NC_\tau }\lambda_\sigma(\sigma\otimes \sigma\cdot\tau+\sigma\cdot\tau\otimes \sigma)\,|\,\sum_{\sigma \in \NC_\tau\cap\CC_\pi} \lambda_\sigma-\sum_{\sigma \in \NC_\tau\cap\NC_\pi} \lambda_\sigma=-\frac{d}{2}\lambda_\pi\,, \,\, \pi\in\NC_\tau\,,\,\, \tau\in \SC\right\}$ &  $(d^2-1)\left(\frac{d(d-2)}{8}\right)$  \\\hline
\end{tabular}
    \caption{Irreps for $\LC(\HC^{\otimes 2})$ under the action of the $n$-qubit Clifford group~\cite{helsen2018representations}. 
    Here we have defined $\SC=\frac{1}{\sqrt{d}}\{\id_2,X,Y,Z\}^{\otimes n}\backslash\{\id^{\otimes n }\}$, $\NC_\tau=\{\sigma \in \SC\,|\,\{\sigma,\tau=0\}\}$, $\CC_\tau=\{\sigma \in \SC\,|\,[\sigma,\tau]=0\}$ }
    \label{tab:irreps}
\end{table*}

\section{Acknowledgments}

The authors thank Maria Schuld, David Wierichs, Nathan Killoran, Diego Garc\'ia-Martin, and Luke Coffman for useful and extremely insightful discussions. In particular, MC would like to thank Maria Schuld for her invaluable feedback on the earlier versions of the manuscript. P. Bermejo and P. Braccia and NLD were supported by the Laboratory Directed Research and Development (LDRD) program of Los Alamos National Laboratory (LANL) under project numbers 20230527ECR and 20230049DR. P. Bermejo also acknowledges constant support from DIPC.  AAM acknowledges support by the U.S. DOE through a quantum computing program sponsored by the LANL Information Science \& Technology Institute. AAM also acknowledges support by the German Federal Ministry for Education and Research (BMBF) under the project FermiQP. NLD also acknowledges support by  the Center for Nonlinear Studies at LANL. NLD was also initially supported by UNLP and CONICET  Argentina. AED, ML and MC were supported by LANL ASC Beyond Moore’s Law project.  This material is based upon work supported by the U.S. Department of Energy, Office of Science, National Quantum Information Science Research Centers, Quantum Science Center (LC). Research presented in this article was supported by the National Security Education Center (NSEC) Informational Science and Technology Institute (ISTI) using the LDRD program of LANL project number 20240479CR-IST.

\bibliography{quantum}

\clearpage
\newpage

\makeatletter
\close@column@grid
\makeatother
\cleardoublepage
\newpage
\onecolumngrid
\renewcommand\appendixname{Appendix}
\appendix

\setcounter{theorem}{1}

\section*{Appendices}

\section{Proof of Theorem~\ref{prop:compress}}

Here we present a proof for Theorem~\ref{prop:compress}, which we state again for convenience.

\begin{theorem}[MaxEnt and free states]\label{prop:compress-si}
Let  $\rho$ be a pure free state. If we find a pure state  $\sigma$ which solves the MaxEnt problem of Eq.~\eqref{eq:max-ent} for $\Lambda=\dim(\mathfrak{g})$, then $\sigma=\rho$.  
\end{theorem}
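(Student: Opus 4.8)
The plan is to bypass the full MaxEnt constraint and use only the single constraint coming from the adjoint irrep $\mathbb{C}R(\mathfrak{g})$, and then invoke the rigidity of generalized coherent states (they are pinned down by the expectation values of the algebra generators). First I would note that $\mathbb{C}R(\mathfrak{g})$ is indeed one of the irreps appearing in $\LC(\HC)$ — the algebra is closed under the adjoint action of its group, as recalled in the main text — and that $\dim(\mathbb{C}R(\mathfrak{g}))\le\dim(\mathfrak{g})=\Lambda$. Hence this irrep is among those constrained in Eq.~\eqref{eq:max-ent}, so $\langle w,\rho\rangle=\langle w,\sigma\rangle$ for every basis element $w$ of $\mathbb{C}R(\mathfrak{g})$; equivalently, the GFD projections of $\rho$ and $\sigma$ onto $\mathbb{C}R(\mathfrak{g})$ agree as vectors, and in particular $\PC_{\mathbb{C}R(\mathfrak{g})}(\sigma)=\PC_{\mathbb{C}R(\mathfrak{g})}(\rho)$.

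Next I would use Theorem~\ref{theo:algebra-purity}: since $\rho$ is a pure free state, $\PC_{\mathbb{C}R(\mathfrak{g})}(\rho)$ is maximal, so $\PC_{\mathbb{C}R(\mathfrak{g})}(\sigma)$ is maximal too, and because $\sigma$ is assumed pure the \emph{iff} in Theorem~\ref{theo:algebra-purity} forces $\sigma$ to be a pure free state as well. At this stage both $\rho$ and $\sigma$ are generalized coherent states sharing the same $\mathbb{C}R(\mathfrak{g})$-projection. Now I would observe that the $\mathbb{C}R(\mathfrak{g})$-projection of a state $\rho=\dya{\psi}$ is, with respect to the Hilbert--Schmidt inner product, determined by (and determines) the vector of moments $\big(\Tr[\rho X_a]\big)_a$ over an orthonormal basis $\{X_a\}$ of $R(\mathfrak{g})$. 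Writing $\rho=R(g)\dya{\psi_{\rm ref}}R(g)^\dagger$ and $\sigma=R(h)\dya{\psi_{\rm ref}}R(h)^\dagger$ with $\ket{\psi_{\rm ref}}$ a highest-weight vector of weight $\lambda$, one has $\Tr[\rho X]=\bra{\psi_{\rm ref}}R(g^{-1})XR(g)\ket{\psi_{\rm ref}}$, and on a highest-weight vector only the Cartan part of $R(g^{-1})XR(g)$ survives, so this moment vector equals the coadjoint image $\mathrm{Ad}^*_g(\lambda)$. Equality of the moment vectors of $\rho$ and $\sigma$ then gives $\mathrm{Ad}^*_{h^{-1}g}(\lambda)=\lambda$, i.e.\ $h^{-1}g$ lies in the coadjoint stabilizer $G_\lambda$. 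For a dominant weight $\lambda$ this stabilizer $G_\lambda=Z_G(H_\lambda)$ coincides with the stabilizer of the line $[\ket{\psi_{\rm ref}}]$ in $\mathbb{P}(\HC)$ (the compact Levi of the parabolic $\mathrm{Stab}_{G_{\mathbb{C}}}([\ket{\psi_{\rm ref}}])$), so $R(h^{-1}g)\ket{\psi_{\rm ref}}\propto\ket{\psi_{\rm ref}}$ and therefore $\rho=\sigma$.

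I expect the main obstacle to be this last rigidity step — equivalently, the statement that the moment map is injective on the coherent-state orbit, which is the content of the Perelomov / Borel--Weil correspondence (the coherent-state manifold is $G$-equivariantly diffeomorphic to the coadjoint orbit of $\lambda$). One can either cite this standard fact from the coherent-states literature or establish the identity ``coadjoint stabilizer of $\lambda$ $=$ projective stabilizer of $\ket{\psi_{\rm ref}}$'' directly via the Levi decomposition of the relevant parabolic subgroup. A reassuring sanity check is that the concrete cases worked out in the main text are exactly instances of this: spin coherent states are determined by $\langle\vec{S}\rangle$, and fermionic Gaussian states are determined by their correlation matrix, which is precisely the $\LC_2$-component (the complexified $\mathfrak{so}(2n)$ algebra). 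Finally, if the reductive decomposition of $\mathfrak{g}$ has several non-trivial simple components, the same argument goes through verbatim after matching the projections onto each of the corresponding adjoint sub-irreps, all of which still have dimension at most $\dim(\mathfrak{g})=\Lambda$.
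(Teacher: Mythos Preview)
Your proposal is correct and shares the first half with the paper's own proof: both note that $\mathbb{C}R(\mathfrak{g})$ is among the constrained irreps, deduce $\PC_{\mathbb{C}R(\mathfrak{g})}(\sigma)=\PC_{\mathbb{C}R(\mathfrak{g})}(\rho)$, and invoke Theorem~\ref{theo:algebra-purity} to conclude that $\sigma$ is also a generalized coherent state.

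Where you diverge is in the rigidity step. The paper argues \emph{algebraically} via Wick's theorem: once both states are highest-weight vectors (up to a free unitary), it pushes raising operators to the right and lowering operators to the left using the Cartan--Weyl decomposition and the algebra's structure constants, so every expectation value $\bra{\phi}O_1\cdots O_M\ket{\phi}$ with $O_j\in iR(\mathfrak{g})$ reduces to a polynomial in the weights $h_k$, hence is fixed by the $\mathbb{C}R(\mathfrak{g})$-data alone. You instead argue \emph{geometrically}: the $\mathbb{C}R(\mathfrak{g})$-projection is the moment map, on the coherent orbit it equals $\mathrm{Ad}^*_g(\lambda)$, and you then identify the coadjoint stabilizer $G_\lambda$ with the projective stabilizer of $\ket{\psi_{\rm ref}}$ via the Borel--Weil/Perelomov picture. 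Both are standard; the paper's route is more self-contained and constructive (it literally tells you how higher-irrep components are rebuilt from the algebra data, which is the ``compressibility'' message of the theorem), while your route is cleaner conceptually but leans on heavier structure theory (parabolic subgroups, Levi factors) that the paper does not otherwise develop. Your closing remark on multiple simple components is a nice addendum; the paper restricts to a single non-trivial reductive component and does not spell this out.
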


\begin{proof}
Consider a QRT where the free operations arise from a representation $R$ of a Lie group $G$ with associated Lie algebra $\mathfrak{g}$. Then, let $\rho=\dya{\psi}$ with $\ket{\psi}$ a free, or generalized coherent state. By definition, there exists a unitary $U\in R(G)$ such that $\ket{\psi}=U\ket{{\rm hw}}$, where $\ket{{\rm hw}}$ is the highest (or lowest) weight of $R(\mathfrak{g})$. 
    
To begin, let us show that if a pure state $\sigma=\dya{\phi}$ solves the MaxEnt problem of Eq.~\eqref{eq:max-ent} for $\Lambda=\dim(\mathfrak{g})$, then $\ket{\phi}$ is a generalized coherent state. We note that  all the expectation values in the irrep $\mathbb{C}R(\mathfrak{g})$ for $\rho$ and $\sigma$ match (as per the statement of Theorem~\ref{prop:compress-si}), then we have
\begin{equation}
\PC_{\mathbb{C}R(\mathfrak{g})}(\sigma)=\PC_{\mathbb{C}R(\mathfrak{g})}(\rho)\,,
\end{equation}
which simply follows from Definition~\ref{def:irrep-purity}. Then simply using   Theorem~\ref{theo:algebra-purity} (see Ref.~\cite{barnum2004subsystem} for its proof), we have that $\ket{\phi}$ is a free, generalized coherent, state.

Next, we need to show that all other expectation values can be obtained from those in $\mathbb{C}R(\mathfrak{g})$. This is known as Wick's theorem, whose proof we briefly sketch here. We also refer the reader to Ref.~\cite{somma2005quantum}. Without loss of generality, Let $R(\g)=\mathfrak{h}\oplus \mathfrak{g}_+\oplus \mathfrak{g}_-$ be the Cartan-Weyl decomposition of $R(\mathfrak{g})$ (where, on the right-hand side, we have left the representation notation implicit.). Here,  $\mathfrak{h}={\rm span}_\mathbb{R}i\{H_k\}$ denotes the Cartan subalgebra, and $\mathfrak{g}_+={\rm span}_\mathbb{R}i\{E_{\alpha }\}$ and $\mathfrak{g}_-={\rm span}_\mathbb{R}i\{E_{-\alpha }\}$ the spaces of raising and lowering operators, respectively. We will assume, without loss of generality that $\ket{\phi}$ is the highest weight for this particular choice of Cartan $\mathfrak{h}$.  As such, we know that the following equation holds
\begin{equation}\label{eq:cases}
E_{\alpha}\ket{\phi}=\bra{\phi}E_{-\alpha }=0,\quad \forall \alpha\quad \text{and} \quad H_k\ket{\phi}=h_k\ket{\phi}\,,
\end{equation}
where $h_k$ are the weights of $\ket{\phi}$. Then, let $O=O_1 O_2\cdots O_M$, with $O_1,O_2,\cdots,O_M\in iR(\g)$ be operators in the algebra. When computing the expectation value $\bra{\phi}O_1 O_2\cdots O_k\ket{\phi}$ (which correspond to all non-trivially-zero expectation values) we decompose each of the operators into elements of the Cartan, as well as lowering and raising operators. That is,
\begin{equation}
    O_j=\sum_k c_k^j H_k  +\sum_\alpha c_\alpha^j E_\alpha+{c_\alpha^{j}}^* E_{-\alpha}\,.
\end{equation}
Consequently, their products can be expressed in terms of summations of products of  elements of the Cartan, lowering and raising operators. Given that these act on the highest-weight vector $\ket{\phi}$, only a few of those terms will be non-zero as per Eq.~\eqref{eq:cases}. For instance, when $M=2$, terms of the form $\bra{\Phi}H_k H_{k'}\ket{\phi}$ and $\bra{\Phi}E_\alpha E_{-\alpha}\ket{\phi}$ are non-zero, whereas terms such as $\bra{\Phi}E_{-\alpha} E_{\alpha}\ket{\phi}$ or $\bra{\Phi}E_{\alpha} E_{\alpha}\ket{\phi}$ vanish. In fact, we can always use the algebra's commutation relations to ``move to the left'' all the lowering operators and ``move to the right'' all the raising operators, as these
will vanish when acting on the highest-weight state. At the end of the day, this will lead to an expectation value which only depends on the components of the root vector and the algebra's structure coefficients, indicating that any expectation value of the form $\bra{\phi}O_1 O_2\cdots O_k\ket{\phi}$ will be determined by the expectation values in $R(\mathfrak{g})$. Since those values match to those in $\ket{\psi}$, we find that all expectation values for $\rho$ and $\sigma$ match, indicating that these two states are equal. 

\end{proof}

\section{GFD purities for a Haar random state in the irreps of $\LC(\HC)$}

In this section we show how to compute expectation values for pure Haar random states over irreps of $\LC(\HC)$ for any representation $R$ of a group $G$. Here we also recall that $\HC=\mathbb{C}^d$. In particular, we will assume that operator spaces is decomposed as
\begin{equation}\label{eq:decomp-irrep-L}
    \LC(\HC)=\bigoplus_\a \LC_\a
\end{equation}
where $\a$ contains the irrep and multiplicity information. Without loss of generality we will assume that $\LC_0={\rm span}_{\mathbb{C}}\{\id\}$. Then, given a Hermitian orthonormal basis $\{B_\a^{\mu}\}_{\mu=1}^{\dim(\LC_\a)}$ of $\LC_\a$, it directly follows that  
\begin{equation}
    \Tr[{\rm SWAP}B_\a^{\mu}\otimes B_{\a'}^{\mu'}]=\Tr[B_\a^{\mu} B_{\a'}^{\mu'}]=\delta_{\a,\a'}\delta_{\mu,\mu'}\,,
\end{equation}

where we have defined the operator ${\rm SWAP}\in\LC(\HC^{\otimes 2})$ as ${\rm SWAP}=\sum_{i,j=1}^d\ket{ij}\bra{ji}$. We then find 
\begin{equation}
    \sum_{\mu=1}^{\dim(\LC_\a)}\Tr[{\rm SWAP}B_\a^{\mu}\otimes B_\a^{\mu}]=\dim(\LC_\a).
\end{equation}

Next, let us recall that the purity of a state $\rho$ on the $\a$-th irrep can be expressed as 
\begin{equation}
    \PC_\a(\rho)=\sum_{\mu=1}^{\dim(\LC_\a)}\Tr[\rho B_\a^{\mu}]^2=\sum_{\mu=1}^{\dim(\LC_\a)}\Tr[\rho^{\otimes 2} {B_\a^{\mu}}^{\otimes 2}]\,.
\end{equation}
For a state sampled according to the Haar measure over $\HC$ we know that~\cite{harrow2013church} 
\begin{equation}\label{eq:Haar}
    \mathbb{E}_\HC[\rho^{\otimes 2}]=\frac{\id\otimes \id +{\rm SWAP}}{d(d+1)}\,.
\end{equation}
Hence, we find that the average GFD purity is  
\begin{equation}\label{ap-eq:haar_random_Lie_gfd}
   \mathbb{E}_\HC[ \PC_\a(\rho)]=\sum_{\mu=1}^{\dim(\LC_\a)}\frac{\Tr[(\id\otimes \id +{\rm SWAP}){B_\a^{\mu}}^{\otimes 2}]}{d(d+1)} =\begin{cases}
       \frac{1}{d} & \text{if $\a=0$}\\
       \frac{\dim(\LC_\a)}{d(d+1)} & \text{else}\\
   \end{cases}\,,
\end{equation}
where we have also used that $\Tr[(\id\otimes \id){B_\a^{\mu}}^{\otimes 2}]=0$ $\forall \mu$ if $\a\neq 0$, which immediately follows from the fact that for any operator $B\in\LC(\HC)$ its trace component $B_{{\rm trace}}=\frac{\Tr[B]}{d}\id$ lives in $\LC_0$. 

\section{GFD purities for a free state in the irreps of $\LC(\HC)$ for a Lie group-based QRT}

In this brief section we present a simple observation that can be used to compute GFD purities for the free states of a Lie group-based QRT. In particular, assume that the free operations are given by a representation $R$ of a Lie group $G$, whose associated Lie algebra is $\mathfrak{g}$. Then, let us denote as $\mathfrak{h}\subset \mathfrak{g}$ the Cartan subalgebra. We know from Eq.~\eqref{eq:cases} that the highest weight of $\mathfrak{h}$ only has non-zero expectation value with the weight-zero elements of the Cartan. Similarly, using Eq.~\eqref{eq:decomp-irrep-L} it will follow that within each $\LC_\a$ only the operators in its weight-zero subspace will have non-zero expectation value. 

As an example, consider the QRT of multipartite entanglement. As discussed in the main text, the free operations arise from the standard representation $R$ of the group $G=\mathbb{SU}(2)\times \mathbb{SU}(2)\times \cdots\times \mathbb{SU}(2)$, whose associated Lie algebra is $\mathfrak{g}=\mathfrak{su}(2)\oplus \mathfrak{su}(2)\oplus \cdots\oplus \mathfrak{su}(2)$. The Cartan subalgebra of each component of the algebra is $\{Z_i\}_{i=1}^n$, and the associated highest weight state is $\ket{0}^{\otimes n}$. The operator space decomposes into $2^n$ multiplicity-free irreps as
\begin{equation}\label{eq:irreps-L-nq-si}
    \LC(\HC)=\bigoplus_{\alpha \in\{0,1\}^{\otimes n}}\LC_{\alpha}\,,
\end{equation}
where $\LC_{\alpha}$ is composed of all operators of the form $P_1^{\alpha_1}\otimes P_2^{\alpha_2}\otimes \cdots \otimes P_n^{\alpha_n}$. Within each irrep the weight-zero subspace contains all operators commuting with $\{Z_i\}_{i=1}^n$, i.e., all tensor products of identities and $Z$ operators. It is not hard to see that there is a single one such operator per $\LC_{\alpha}$, and that it appears in $\dya{0}^{\otimes n}=\left(\frac{\id + Z}{2}\right)^{\otimes n}$. Therefore
\begin{equation}
    \PC_\alpha(\dya{0}^{\otimes n})=\frac{1}{2^n}\,,
\end{equation}
where the factor $1/2^n$ arises from the normalization of the basis elements of $\LC_\alpha$. Hence, the purity in all irreps with Hamming weight $w(\alpha)$ equal to $k$ is equal to the number of such irreps. Since there are $\binom{n}{k}$ irreps with  $w(\alpha)=k$ we find
\begin{equation}
   \PC_{k}(\dya{0}^{\otimes n})= \sum_{\a:\,w(\alpha)=k} \PC_{\alpha}(\rho_P)=\frac{\binom{n}{k}}{2^n}\,.
\end{equation}
As discussed thoroughly in the main text, the GFD purities are invariant under the action of the chosen free group $R(G)$,  therefore all the tensor product free states of the multipartite entanglement QRT have the same GFD purities as  $\dya{0}^{\otimes n}$, matching the results presented in the Methods.

Using this example as a blueprint, one can work out the GFD purities of the free states of any Lie group-based QRT.

\section{GFD purities for bipartite entanglement in two-qubit states}

Given the relative simplicity of this case, all calculations can be achieved by directly computing the expectation values. 

\section{GFD purities for multipartite entanglement in $n$-qubit states}

Here we derive the GFD purities for a generic tensor product state $\rho_P=\bigotimes_{\mu=1}^n \dya{\psi_\mu}$, the GHZ state $\rho_{{\rm GHZ}}=\dya{{\rm GHZ}}$ with $
\ket{{\rm GHZ}}=\frac{1}{\sqrt{2}}(\ket{0}^{\otimes n}+\ket{1}^{\otimes n})$, the $W$ state $\rho_{{\rm W}}=\dya{{\rm W}}$ with $\ket{W}=
\frac{1}{\sqrt{n}}\sum_{\mu=1}^{n}X_i|0 \rangle^{\otimes n}$, and an $n$-qubit Haar random state $\rho_H=\dya{\psi_H}$.

First, we note that the cases of tensor product (free) and Haar random states were addressed in the previous sections. Then, let us consider the GHZ state defined as
\begin{equation}
    \ket{{\rm GHZ}}=\frac{1}{\sqrt{2}}(\ket{0}^{\otimes n}+\ket{1}^{\otimes n})\,.
\end{equation}
Using $\dya{0}=\frac{\id+Z}{2}$, $\ketbra{0}{1}=\frac{X+iY}{2}$, and $\ketbra{1}{0}=\frac{X-iY}{2}$ one can check that $\rho_{{\rm GHZ}}$ can only have non-zero overlap with any operator of the form $\{\id,Z\}^{\otimes n}$ if $Z$ appears an even number of times, and with any operator of the form $\{X,Y\}^{\otimes n}$ with an even number of $Y$'s. Therefore, we find for all $\alpha$ with $w(\alpha)<n-1$ that 
\begin{equation}
    \PC_{\alpha}({\rm GHZ})=\begin{cases}
        \frac{1}{2^n}\, & \text{if $w(\alpha)$ is even}\\
        0\, & \text{if $w(\alpha)$ is odd}
    \end{cases}\,.
\end{equation}
The cases $w(\alpha)=n-1,n$ depend on the parity of $n$. For $n$ even we find
\begin{equation}
    \PC_{\alpha}({\rm GHZ})=\begin{cases}0\, & \text{if $w(\alpha)=n-1$}\\
    \frac{1}{2}+\frac{1}{2^n}\, & \text{if $w(\alpha)=n$}
    \end{cases}\,.
\end{equation}
Then, for odd $n$  
\begin{equation}
    \PC_{\alpha}({\rm GHZ})=\begin{cases}\frac{1}{2^n}\, & \text{if $w(\alpha)=n-1$}\\
    \frac{1}{2}\, & \text{if $w(\alpha)=n$}
    \end{cases}\,.
\end{equation}
Aggregating the purities by the irreps' hamming weight $w(\a)=k$ one recovers the $\PC_k({\rm GHZ})$ presented in the Methods.

Next, we consider the $W$ state given by 
\begin{equation}
|W\rangle = \frac{1}{\sqrt{n}} \sum_{i=1}^{n}X_i\ket{0}^{\otimes n}\,.
\end{equation}
Here, we find it convenient to define as $\BC_\alpha$ an orthonormal basis of Paulis for $\LC_\alpha$, and as $\BC_\alpha^z\subseteq \BC_\alpha$ the set of the Paulis composed only of identities and Z operators therein (i.e., the basis for the weight-zero subspace). The GFD purity in the $\alpha$-th irrep can be expressed as
\begin{align}
\PC_{\alpha}(|W\rangle \langle W|) = \frac{1}{n^2}\sum_{P\in\BC_\alpha }\Tr[\sum_{i,j}\dya{0}^{\otimes n}X_i P X_j] ^2 =
\frac{1}{n^2} \sum_{P\in \BC_\alpha}  \left( \sum_{i=1}^N A_i(P) + \sum_{i\neq j}^N B_{i,j}(P) \right) ^2\,,
\end{align}
where we have defined $A_i(P) = \Tr[\dya{0}^{\otimes n}X_i P X_i]$, $B_{i,j}(P) = \Tr[\dya{0}^{\otimes n}X_i P X_j]$.
Now, let us notice that $\dya{0}^{\otimes n}$ only has support over $\BC_\alpha^z$, thus the only Paulis $P\in\BC_\a$ contributing to the GFD purities will be those such that $X_iPX_j\in\BC_\alpha^z$, modulo signs.
First, let us consider the $A_i$ terms. 
Given that the $i$-th component of the Pauli string $P$ is conjugated by $X$, and that the ensuing Pauli must belong to $\BC_\alpha^z$ to contribute non-vanishingly to the GFD purity, one finds, for $P\in\BC_\a$
\begin{equation}
    A_i(P) = \frac{1}{\sqrt{2^n}}(\delta_{x_i,0}-\delta_{x_i,1})\delta_{P, \BC_\a^z}\,.
\end{equation}
Indeed, $P$ must belong to $\BC_\a^z$, a condition that we denote via the Kronecker delta $\delta_{P, \BC_\a^z}$, leaving only two options for its $i$-th element: $\id$ if the irrep $\alpha\in\{0,1\}^{\otimes n}$ has $\alpha_i=0$, and $Z$ otherwise. The latter yields the sign difference, while the $1/\sqrt{2^n}$ factor comes from the Pauli expansion of the $\dya{0}^{\otimes n}$ state.
Next, consider the $B_{i,j}$ terms.
Since $i\neq j$, now both the $i$-th and $j$-th components of $P$ multiply, rather than conjugate, by $X$, and hence they must belong to $\{X,Y\}$ to produce an element of $\BC_\a^z$. Furthermore, notice that the ``mixed terms'' $X_iY_j$ and $Y_iX_j$ will always appear in the sum over $\BC_\a$ and cancel each other out, as it is easy to see that they yield the same contribution but with opposite signs. Thus, we can just consider the ``square terms'' $X_iX_j$ and $Y_iY_j$, leading to, for $P\in\BC_\a$
\begin{equation}
    B_{i,j}(P) = \delta_{\alpha_i,1}\delta_{\alpha_j,1}\frac{1}{\sqrt{2^n}}(\delta_{P,P_{(\bar{ij})}^zX_iX_j} + \delta_{P,P_{(\bar{ij})}^zY_iY_j})\,,
\end{equation}
where we have defined as $\BC_{(\bar{ij})}^zX_iX_j\subseteq\BC_\a$ the set of operators composed of identity and $Z$ on all qubit sites except the $i$-th and $j$-th ones, with the latter being $X$ operators. Then, the Kronecker delta $\delta_{P,\BC_{(\bar{ij})}^zX_iX_j}$ is defined as being non-zero only if $P$ belongs to such subset, and analogously for the $Y$ terms.

Combining the previous results leads to
\begin{align}
&\PC_{\alpha}(\dya{W}) = \frac{1}{n^2 2^n}\sum_{P\in\BC_\a} \left(\sum_{i=1}^{n}(\delta_{\alpha_i,0} - \delta_{\alpha_i,1})\delta_{P, \BC_\a^z} \right.+2\sum_{i<j}\delta_{\alpha_i,1}\delta_{\alpha_j,1}\left.(\delta_{P,\BC_{(\bar{ij})}^zX_iX_j} + \delta_{P,\BC_{(\bar{ij})}^zY_iY_j}) \right) ^2\,.
\end{align}

Since all the delta terms are independent, we can carry out the sum, leading to
\begin{equation}
\PC_{\alpha}(\dya{W}) = \frac{(n-2w(\alpha))^2 + 8\binom{w(\alpha)}{w(\alpha)-2}}{n^22^n} \,.
\end{equation}
Notice that, consistently with $\ket{W}$ being translationally invariant, the GFD purity in each irrep $\a$ with fixed Hamming weight $w(\alpha)$ is the same. Hence, we can readily obtain the aggregated purities in all the irreps with the same weight $w(\alpha)=k$ as
\begin{equation}
\PC_{k}(\dya{W}) = \frac{(n-2k)^2 + 8\binom{k}{k-2}}{n^22^n}\binom{n}{k} \,,
\end{equation}
leading to the result presented in the Methods section.

\section{GFD purities for fermionic Gaussianity}

First, let us consider the pure Gaussian state $\ket{0}^{\otimes n}$. As noted above, this state only has component in the irreps containing weight-zero subspaces composed of identity and $Z$ operators. Given that any such Pauli $P$ can be expressed as products of $Z_i$ with $i=1,\ldots,n$, and that each $Z_i$ is composed of two Majoranas ($Z_i=c_{2i-1}c_{2i}$), then we know that it will be expressed as a product of $2w(P)$ distinct Majorana operators and hence belong to the irrep $\LC_{2w(P)}$, with $w(P)$ the weight of the Pauli $P$. From the previous, we find that a given irrep $\LC_\alpha$ contains a weight-zero subspace of size $\binom{n}{\alpha/2}$ and hence we recover
\begin{equation}\label{eq:pur-gauss-si}
\PC_\alpha(\dya{0}^{\otimes n})=\frac{\binom{n}{\alpha/2}}{2^n}\,.
\end{equation}

Next, let us consider the GHZ state, which we can expand into Paulis  as
\begin{align}
    \dya{{\rm GHZ}}=\frac{1}{2}(\dya{0}^{\otimes n}+\dya{1}^{\otimes n}+\ket{0}\bra{1}^{\otimes n}+\ket{1}\bra{0}^{\otimes n})=\frac{1}{2^{n}}\left(\sum_{\substack{\vec{x}\in\{0,1\}^{\otimes n}\\ w(\vec{x})={\rm even}}}Z^{\vec{x}}+(-i)^{w(x)}X^{\vec{\bar{x}}}Y^{\vec{x}}\right)\,,
\end{align}
where we defined $\vec{\bar{x}}$ as the negation of the bitstring $\vec{x}$. Notice that $n$ here must be even, or else the GHZ state would not have definite parity, resulting in it not being fermionic. The number of Majorana modes $n$ being even also fixes the GHZ state parity to +1, meaning $\ket{\rm GHZ}\in\HC_+$, implying that all the GFD purities $\PC_\a$ for odd $\a$ are zero.
Now, from the expansion above we can see that all the terms $X^{\vec{\bar{x}}}Y^{\vec{x}}$ belong to $\LC_n$, as they correspond to the $2^n$ possible choices of different Majorana operators at each site, while the diagonal terms $Z^{\vec{x}}$ will be in $\LC_\alpha$ with $\alpha$ a multiple of 4. Hence, we find 
\begin{equation}\label{eq:ghz_fent_1}
    \PC_\alpha(\dya{{\rm GHZ}})=\frac{\binom{n}{\alpha/2}}{2^n}\,,
\end{equation}
if $\alpha$ is a multiple of four, otherwise 
\begin{equation}\label{eq:ghz_fent_2}
        \PC_n(\dya{{\rm GHZ}})=\frac{2^{n-1}+\binom{n}{n/2}}{2^n}\,.
\end{equation}
Putting everything together recovers the result presented in the Methods.

Now let us restrict to the case when $n$ is a multiple of four, and consider the parametrized extent state
\begin{equation}
    \ket{\chi(\gamma)}=(\cos(\gamma/4)\ket{0}^{\otimes 4}+\sin(\gamma/4)\ket{1}^{\otimes 4})^{\otimes n/4}\,.
\end{equation}
To compute the GFD irrep purities, we find it convenient to first note that the density matrix $\rho_\gamma=\dya{\chi(\gamma)}$ can be expressed as
\begin{align}
    \rho_\gamma=\frac{1}{2^{n}}\Big(& \id^{\otimes 4}+\cos(\gamma/2)\sum_i Z_i+\sin(\gamma/2)(X^{\otimes 4}+Y^{\otimes 4})-{\sin(\gamma/2)}\sum_{i<j<k<l}X_iX_jY_kY_l+\sum_{i<j} Z_iZ_j\nonumber\\&+\cos(\gamma/2)\sum_{i<j<k} Z_iZ_jZ_k+Z^{\otimes 4}\Big)^{\otimes n/4}\,.\nonumber
\end{align}
Then, we note that the operators in the parentheses are such that $\id^{\otimes 4}$ is the product of zero Majoranas; $Z_i$ is the product of two; $X^{\otimes 4},Y^{\otimes 4},X_iX_jY_kY_l, Z_iZ_j$ are the product of four; $Z_iZ_jZ_k$ is the product of six, and $Z^{\otimes 4}$ is the product of eight. When expanding the tensor product, we can thus know to which module each term will belong by expressing it as a product of the operators above.
Moreover, we note that there are  $\binom{4}{1}$ terms that can be expressed as a product of two Majoranas, and that each has a weight  $\cos(\gamma/2)$. Same for the case of six Majoranas. The eight Majoranas operator $Z^{\otimes 4}$ is unique and has weight one, while there are $\binom{4}{2}$ Paulis with weight one that are a product of four Majoranas, and $2+\binom{4}{2}$ of them with weight of module $\sin(\gamma/2)$.
Noticing that the weights will be squared when computing the purities, we find it convenient to define the quantities
\begin{align}
    N_2(\gamma)=\binom{4}{1}\cos^2(\gamma/2)\,,\quad 
    N_4(\gamma)=\left(\binom{4}{2}+\left(2+{\binom{4}{2}}\right)\sin^2(\gamma/2)\right)\,,\quad 
    N_6(\gamma)=\binom{4}{1}\cos^2(\gamma/2)\,,\quad N_8=1\,.
\end{align}
A simple counting argument then reveals that the module purity of this state is given by
\small
\begin{align}
\PC_\alpha(\rho_\gamma)=\frac{1}{2^n}\sum_{i_8=0}^{\lfloor \frac{\alpha}{8} \rfloor} \sum_{i_6=0}^{\lfloor \frac{\alpha}{6} \rfloor} \sum_{i_4=0}^{\lfloor \frac{\alpha}{4} \rfloor} \sum_{i_2=0}^{\lfloor \frac{\alpha}{2} \rfloor}
    \binom{\frac{n}{4}}{i_8} \binom{\frac{n}{4}-i_8}{i_6}  \binom{\frac{n}{4}-i_8-i_6}{i_4} \binom{\frac{n}{4}-i_8-i_6-i_4}{i_2}  N_2(\gamma)^{i_2}N_4(\gamma)^{i_4}N_6(\gamma)^{i_6}N_8^{i_8}\delta_{\a,8i_8+6i_6+4i_4+2i_2}\,.\nonumber
\end{align}
\normalsize

Finally, we consider a Haar random state with defined fermionic parity (i.e., a state sampled according to the Haar measure over $\HC_+$). In particular, we can start with any Haar random state $\ket{\psi}$ over the full Hilbert space $\HC$ and project it into the even parity sub-sector as 
\begin{equation*}
\ket{\phi} = \frac{\ket{\psi} + Z^{\otimes n}\ket{\psi}}{\norm{\ket{\psi} + Z^{\otimes n}\ket{\psi}}} =  \frac{\ket{\psi} + Z^{\otimes n}\ket{\psi}}{\sqrt{2+2\bra{\psi}Z^{\otimes n}\ket{\psi}}} \,.
\end{equation*}
Now, the average GFD purities over the fermionic Gaussianity QRT irreps $\LC_\a$ would then read
\begin{equation}
\mathbb{E}_{\ket{\phi}\sim\HC_+}[ \PC_\a(\dya{\phi})] = 
\mathbb{E}_{\ket{\phi}\sim\HC}[\frac{\PC_\a((\ket{\psi} + Z^{\otimes n}\ket{\psi})(\bra{\psi} + \bra{\psi}Z^{\otimes n}))}{2+2\bra{\psi}Z^{\otimes n}\ket{\psi}}]\,.
\end{equation}
At this point we notice that, given two functions $f$ and $g$ of a random variable $x$, the expectation value over the latter of the ratio $f/g$ can be expanded as
\begin{equation}
    \mathbb{E} [\frac{f}{g}] \approx \frac{\mathbb{E} [f]}{\mathbb{E} [g]} + \frac{\mathbb{E}[f]\,\mathrm{Var}(g)}{\mathbb{E}[g]^3} - \frac{\mathrm{Cov}(f,g)}{\mathbb{E}[g]^2} + \cdots \, ,
\end{equation}
where higher order terms have been dropped.
Since in the case at hand, the expectation values are carried over Haar random states in a $2^n$-dimensional Hilbert space, one can show that the corrections to the zeroth-order approximation $\frac{\mathbb{E} [f]}{\mathbb{E} [g]}$ are going to be exponentially small.
Hence, let us put ourselves in a sufficiently large $n$ regime and rewrite
\begin{equation*}
\ket{\phi} = \frac{1}{c} (\ket{\psi} + Z^{\otimes n}\ket{\psi})\,,
\end{equation*}
where $c$ now is a constant normalization coefficient which we will fix by imposing, on average, the constraint $|\langle \phi|\phi\rangle| =1$. That is, we need to solve for $c$ in the equation
\begin{equation}
\mathbb{E}_{\ket{\psi}\sim\HC}\left[\frac{1}{c^2} (2+2\bra{\psi}Z^{\otimes n}\ket{\psi})\right]=1\,.  
\end{equation}
Using Weingarten calculus~\cite{mele2023introduction} we can find that $\mathbb{E}_{\ket{\psi}\sim\HC}[\bra{\psi}Z^{\otimes n}\ket{\psi}]=0$, leading to $c = \sqrt{2}$.

Now, let us denote as $\BC_\a=\{P_\mu\}_{\mu=1}^{\binom{2n}{\a}}$ an orthonormal basis of Pauli operators of the irreps $\LC_\a$, namely those (normalized) Pauli strings consisting of the products of $\a$ distinct Majorana operators. 
The Haar averaged GFD purities for the fermionic Gaussianity QRT thus read
\begin{equation}
\mathbb{E}_{\ket{\phi}\sim\HC_+}[ \PC_\a(\dya{\phi})]=\sum_{P\in\BC_a}\Tr[ \mathbb{E}_{\ket{\phi}\sim\HC_+}[\dya{\phi}^{\otimes 2}]P^{\otimes 2}]\,.
\end{equation}
To perform this computation, we will explicitly expand $\dya{\phi}^{\otimes 2}$, in terms of $\dya{\psi}^{\otimes 2}$, and shift the expectation value from $\mathbb{E}_{\ket{\phi}\sim\HC_+}$ to $\mathbb{E}_{\ket{\psi}\sim\HC}$, allowing us to leverage the tools of Weingarten calculus. 
First, we find 
$\dya{\phi} = \frac{1}{c^2} (\dya{\psi} + Z^{\otimes n} \dya{\psi} + \dya{\psi} Z^{\otimes n} + Z^{\otimes n} \dya{\psi} Z^{\otimes n})$, which implies 
\small
\begin{align}
\dya{\phi} ^{\otimes 2} =& \frac{1}{c^4}(\dya{\psi}^{\otimes 2} + (\id\otimes Z^{\otimes n}) \dya{\psi}^{\otimes 2}  + \dya{\psi}^{\otimes 2} (\id\otimes Z^{\otimes n})+  (\id\otimes Z^{\otimes n}) \dya{\psi}^{\otimes 2} (\id\otimes Z^{\otimes n}) + (Z^{\otimes n}\otimes\id)\dya{\psi}^{\otimes 2}  \nonumber\\
&    + (\Z^{\otimes n}\otimes Z^{\otimes n})\dya{\psi}^{\otimes 2}   + (\Z^{\otimes n}\otimes \id)\dya{\psi}^{\otimes 2}(\id\otimes\Z^{\otimes n})
+ (\Z^{\otimes n}\otimes \Z^{\otimes n})\dya{\psi}^{\otimes 2}(\id\otimes\Z^{\otimes n})  \nonumber\\
& + \dya{\psi}^{\otimes 2} (Z^{\otimes n} \otimes \id) + (\id\otimes Z^{\otimes n})\dya{\psi}^{\otimes 2} (Z^{\otimes n}\otimes\id) +\dya{\psi}^{\otimes 2}(Z^{\otimes n}\otimes Z^{\otimes n})  + (\id\otimes Z^{\otimes n})\dya{\psi}^{\otimes 2}(Z^{\otimes n}\otimes Z^{\otimes n})  \nonumber \\
&     + (Z^{\otimes n}\otimes\id)\dya{\psi}^{\otimes 2}(Z^{\otimes n}\otimes\id) +(Z^{\otimes n}\otimes Z^{\otimes n})\dya{\psi}^{\otimes 2}(Z^{\otimes n}\otimes\id) + (Z^{\otimes n}\otimes\id)\dya{\psi}^{\otimes 2}(Z^{\otimes n}\otimes Z^{\otimes n}) \nonumber\\
&   + (Z^{\otimes n}\otimes Z^{\otimes n})\dya{\psi}^{\otimes 2} (Z^{\otimes n}\otimes Z^{\otimes n}) \nonumber\,.
\end{align}
\normalsize
Then, we can replace $\mathbb{E}_{\ket{\psi}\sim\HC}[\dya{\psi}^{\otimes 2}]$ by the result in Eq.~\eqref{eq:Haar}, and evaluate the contribution arising from  $\id \otimes \id$ and from ${\rm SWAP}$ separately. 
First let us notice that in the case of the trivial irrep $\LC_0$, the associated GFD purity $\mathbb{E}_{\ket{\phi}\sim\HC_+}[ \PC_0(\dya{\phi})]=\frac{1}{2^n}$ since we are considering pure states. Hence, in the following, we will only consider $\a>0$.
Then, with the help of some bookkeeping, one finds that the terms arising from the $\id \otimes \id$ component of $\mathbb{E}_{\ket{\psi}\sim\HC}[\dya{\psi}^{\otimes 2}]$ lead to the following contribution to the GFD purities
\begin{align}
\frac{4}{c^4d(d+1)}(\Tr[P]^2 + \Tr[P\otimes P Z^{\otimes n}] +  \Tr[P Z^{\otimes n}\otimes P] +  \Tr[P Z^{\otimes n} \otimes P Z^{\otimes n}])\,,
\end{align}
where the only possibly contributing term is $\Tr[P Z^{\otimes n}\otimes P Z^{\otimes n}]$ when $P$ is equal to Z$^{\otimes n}$. This is the only element in the one-dimensional irrep $\LC_{2n}$, so this expectation value gives $\frac{4}{c^4d(d+1)}\delta_{\a, 2n}$. 
Next, the terms arising from ${\rm SWAP}$ lead to
\begin{align}
  & \frac{4}{c^4d(d+1)}(\Tr[P^2] + \Tr[\text{SWAP} \cdot P \otimes P Z^{\otimes n}]  + 
 \Tr[\text{SWAP} \cdot P Z^{\otimes n} \otimes P ] + \Tr[\text{SWAP} \cdot P Z^{\otimes n}\otimes P Z^{\otimes n}])  \nonumber\\
&= \frac{4}{c^4d(d+1)}( \Tr[P ^2] + \Tr[P \cdot P Z^{\otimes n}] 
 + \Tr[\rm P Z^{\otimes n} \cdot P ] +  \Tr[P Z^{\otimes n} \cdot P Z^{\otimes n}])\,. \nonumber
\end{align}
Here the only contributing terms are $\Tr[P^2]$, which is equal to one by the normalization of $\BC_\a$, and $\Tr[P Z^{\otimes n} P Z^{\otimes n}]$. The latter is equal to one if $P$ commutes with $Z^{\otimes n}$ and to minus one if it anticommutes with it. Since $Z^{\otimes n}$ is the product of all $2n$ Majorana operators, and for any given $\LC_\a$, $P$ is the product of $\a$ of them, we readily have that $P$ commutes with $Z^{\otimes n}$ when $\a$ is even and anticommutes otherwise. In the latter case the two contributions from the ${\rm SWAP}$ operator cancel out, coherently with the fact that we are sampling $\ket{\phi}$ from $\HC_+$, whereas for even $\a$ we get a contribution $\frac{8}{c^4d(d+1)}$.
Combining the previous results, summing over the $\binom{2n}{\a}$ Paulis in each $\BC_a$, and substituting the value of $c$, we obtain the result presented in the Methods
\begin{equation}
    \mathbb{E}_{\ket{\phi}\sim\HC_+}[ \PC_\a(\dya{\phi})]\approx\begin{cases}
        0 & \text{if $\a$ is odd}\\
        \frac{1}{2^n} & \text{if $\a=0$}\\
        \frac{2\binom{2n}{\a}}{2^n(2^n+1)}  & \text{if $\a$ is even and $0<\a<2n$} \\
        \frac{3}{2^n(2^n+1)}  & \text{if $\a=2n$}
    \end{cases}\,.
\end{equation}

\section{GFD purities for spin coherence}

Let $G=\mathbb{SU}(2)$ and consider a spin-$s$ irreducible module $\HC=\spn\{ \ket{s,m}\}_{m= -s}^{s}$. Then, the operator space decomposes as  
\begin{equation}
\LC(\HC)= \bigoplus_{s'=0}^{2s} \LC_{s'}\,,
\end{equation}
where each irrep  $\LC_{s'}={\rm span}_{\mathbb C}\{ L_{s',m'}\}_{m'=- s'}^{s'}$ is spanned by the operators
\begin{equation}\label{eq:su2_irrep_basis}
L_{s',m'}= \sum_{m =-s}^s (-1)^{s-(m-m')} c^{s,s,s'}_{m,m'-m,m'} \ketbra{m}{m-m'} \,,
\end{equation}
where $c_{m_1,m_2,m}^{s_1,s_2,s}$ is a Clebsch-Gordan coefficient.

For a state of the form $\ket{s,m}$ with associated density matrix $\rho_{m}$, where we drop the $s$ index for convenience, we can readily find
\begin{align}
    \PC_{\alpha=s'}(\rho_m)=&\sum_{m'=-s'}^{s'} \langle s,m|L_{s',m'}|s,m\rangle^2=\left( c_{m,-m,0}^{s,s,\alpha}\right)^2\,.
\end{align}
In turn, using the identity $c_{s,-s,0}^{s,s,\alpha} = (2 s)! \sqrt{\frac{2 \a+1}{(2 s-\a)! \Gamma (2 s+\a+2)}}$ one recovers the GFD purities of the free states, i.e. the orbit of $\ket{s,s}$.

Analogously, when considering the GHZ state $\ket{\rm GHZ}=\frac{1}{\sqrt{2}}(\ket{s,s}+\ket{s,-s})$, using the previous results for $\rho_m$ allows us to show that if $\alpha\neq 2s$  we get
\begin{align}\label{eq:pur-ghz-su2-si}
    \PC_{\alpha}(\rho_{{\rm GHZ}})&=\frac{1}{4}\left( c_{m,-m,0}^{s,s,\alpha}+(-1)^{2 s}  c_{m,-m,0}^{s,s,\alpha}\right)^2=\begin{cases}\left( c_{m,-m,0}^{s,s,\alpha}\right)^2\,,  & \text{ if $\alpha$ is even}\\
    0\,, & \text{if $\alpha$ is odd}
    \end{cases}\,,
\end{align}
while for $\alpha=2s$ 
\begin{align}
    \PC_{2s}(\rho_{{\rm GHZ}})=&\frac{1}{4}\left( c_{s,-s,0}^{s,s,2s}+(-1)^{2 s}  c_{-s,s,0}^{s,s,2s}\right)^2+\frac{1}{4} (c_{-s,-s,-2s}^{s,s,2s})^2 + (c_{s,s,2s}^{s,s,2s})^2\,.
\end{align}

Lastly, the GFD purities presented in the Methods section for a Haar random quantum state $\rho_H=\dya{\psi}$, with $\ket{\psi}$ sampled uniformly according to the Haar measure on $\HC$, can be recovered directly from Eq.~\eqref{ap-eq:haar_random_Lie_gfd}.

\section{GFD purities for Clifford stabilizerness}

We refer the reader to Ref.~\cite{leone2022stabilizer} for a derivation of the stabilizer entropies for a stabilizer state, a Haar random state and for the magic state $\ket{M}^{\otimes n}$. Then, given that the stabilizer entropy is linearly proportional to the irrep purity in $\LC_1$, we obtain the desired results.

\end{document}